\documentclass[a4paper,UKenglish,cleveref, autoref, thm-restate]{lipics-v2021}
\hideLIPIcs  %uncomment to remove references to LIPIcs series (logo, DOI, ...), e.g. when preparing a pre-final version to be uploaded to arXiv or another public repository

\title{Approximating $\delta$-Dispersion} %TODO Please add

\author{Tom Janßen}{Department of Computer Science, RWTH Aachen University, Germany}{janssen@algo.rwth-aachen.de}{https://orcid.org/0000-0003-4617-3540}{Funded by the German Research Foundation (DFG) – WO 1451/2-1}
\authorrunning{T. Janßen} %TODO mandatory. First: Use abbreviated first/middle names. Second (only in severe cases): Use first author plus 'et al.'

\Copyright{Tom Janßen} %TODO mandatory, please use full first names. LIPIcs license is "CC-BY";  http://creativecommons.org/licenses/by/3.0/

\ccsdesc[500]{Theory of computation~Approximation algorithms analysis}
\ccsdesc[500]{Theory of computation~Problems, reductions and completeness}

\keywords{Facility Location, Approximation Algorithms, Approximation Hardness, Independent Set} %TODO mandatory; please add comma-separated list of keywords

\category{} %optional, e.g. invited paper

\relatedversion{} %optional, e.g. full version hosted on arXiv, HAL, or other respository/website
\nolinenumbers %uncomment to disable line numbering

\EventEditors{John Q. Open and Joan R. Access}
\EventNoEds{2}
\EventLongTitle{42nd Conference on Very Important Topics (CVIT 2016)}
\EventShortTitle{CVIT 2016}
\EventAcronym{CVIT}
\EventYear{2016}
\EventDate{December 24--27, 2016}
\EventLocation{Little Whinging, United Kingdom}
\EventLogo{}
\SeriesVolume{42}
\ArticleNo{23}
\usepackage{mathtools}
\usepackage{xparse}

\usepackage{tikz}
\usetikzlibrary{positioning,calc,arrows.meta,decorations.pathreplacing,angles,quotes,shapes.geometric,shapes.misc}

\newcommand{\oball}[2]{\ensuremath{B^<\left(#1, #2\right)}}
\newcommand{\cball}[2]{\ensuremath{B^\leq\left(#1, #2\right)}}

\newcommand{\half}{\ensuremath{\tfrac{1}{2}}}

\newcommand{\OPT}{\operatorname{OPT}}
\newcommand{\wreath}{\operatorname{wr}}

\newcommand{\PTIME}{{\sf P}}
\newcommand{\NP}{{\sf NP}}

\newcommand{\XP}{{\sf XP}}
\newcommand{\APX}{{\sf APX}}
\newcommand{\pAPX}{{\sf poly-APX}}

\NewDocumentCommand{\disp}{O{\delta}}{\ensuremath{#1}\text{-}\textsc{Dispersion}}
\NewDocumentCommand{\autodisp}{O{\delta}}{\ensuremath{#1}\text{-}\textsc{AutoDispersion}}
\NewDocumentCommand{\dispn}{O{\delta}O{G}}{\ensuremath{#1}\textup{-disp}\ensuremath{(#2)}}
\NewDocumentCommand{\autodispn}{O{\delta}O{G}}{\ensuremath{#1}\textup{-auto-disp}\ensuremath{(#2)}}

\begin{document}

\maketitle

\begin{abstract}
 We consider a continuous facility location problem called \disp.
For some fixed $\delta > 0$, the goal is to place as many facilities on a graph as possible with pairwise distance at least $\delta$.
The facilities may be located on the vertices of the graph, or the interior of the edges.
This problem can be interpreted as a continuous version of the well-known {\sc Independent Set} problem.
Its approximation behavior is very similar for large values of $\delta$.
Notably, Grigoriev et al.~[Algorithmica 21] showed that \disp{} is solvable in polynomial time when $\delta = \frac{1}{x}$ or $\delta = \frac{2}{x}$ for a natural number $x$ and \NP-hard otherwise.

We study the approximability of \disp{} depending on the value of $\delta$.
For $\delta > 2$, we show \pAPX-hardness, while for all $\delta < 2$ that are not solvable in polynomial time we show \APX-hardness.
Thanks to a translation theorem for $\delta$ due to Hartmann et al.~[MFCS 22], we may focus our attention for approximation algorithms on the intervals $(\frac{2}{3}, 1)$ and $(1, 2)$.
We provide several approximation algorithms with an approximation factor approaching $1$ as $\delta$ approaches one of the interval boundaries.
Surprisingly, the behavior as $\delta$ approaches $\frac{2}{3}$ from above is different:
As our hardness reductions reveal, it is impossible (under standard complexity-theoretic assumptions) to construct an approximation algorithm with an approximation factor approaching $1$ as $\delta$ approaches $\frac{2}{3}$ from above.
\end{abstract}
\newpage

\section{Introduction}

This work investigates the approximability of a continuous facility location problem. 
In this model, the input is a graph with unit-length edges, where a \emph{point} can be located either on a vertex or at any real distance \(0 \leq \lambda \leq 1\) along an edge. 
The continuum set of points on all vertices and edges is denoted with $P(G)$, and the distance of two points is the length of the shortest path in the underlying metric space.
Originally introduced by Dearing \& Francis~\cite{Dearing1974}, this model has been extensively studied under various optimization objectives.

\disp{} has a rational parameter $\delta > 0$.
Then, a subset $S \subseteq P(G)$ is \emph{$\delta$-dispersed}, if for any two points $p,q \in S$ it holds that their distance $d(p,q) \geq \delta$.
Given a graph \(G\) and a fixed \(\delta > 0\), the objective is to find a \(\delta \)-dispersed set of maximum cardinality. 
The size of a maximum $\delta$-dispersed set $S$ is denoted with $\dispn = |S|$.
The decision version of this problem is known as \disp.

Both the standard complexity and parameterized complexity of \disp{} are well understood.
\disp{} is polynomial-time solvable when $\delta$ can be written as $\frac{1}{x}$ or $\frac{2}{x}$ for a natural number $x > 0$, and \NP-hard otherwise, as shown by Grigoriev et al.~\cite{DBLP:journals/algorithmica/GrigorievHLW21}. 
A later work by Hartmann et al.~\cite{DBLP:conf/mfcs/HartmannL22} settled the parameterized complexity with solution size as parameter, as well as other natural graph parameters.
For $\delta < 2$, \disp{} is fixed-parameter tractable with solution size as parameter, while for $\delta > 2$ it becomes $W[1]$-hard, due to its similarities to Independent Set.

Given these boundaries, exploring the approximability of \(\delta \)-Dispersion is a natural next step. 
Tamir~\cite{Tamir1991} previously examined a closely related dual problem where the number of points is fixed and \(\delta \) is maximized. 
However, the approximability of \disp{} across different (but fixed) values of \(\delta \) remains unexplored.
We provide a comprehensive analysis of these approximation properties for all rational \(\delta > 0\). 
Because of the problem's inherent connection to Independent Set, positive approximation results are highly unlikely when \(\delta > 2\). 
Conversely, because any \(\delta \)-dispersed set remains valid for any \(\delta' < \delta\), the polynomial-time solvable cases provide a natural approximation mechanism for \(\delta < 2\). 
Consequently, as \(\delta \) approaches a polynomial-time solvable value of $\delta$ from below, we expect the achievable approximation ratio converge to 1.

\paragraph*{Further Related Work}

Standard literature on facility location includes the books by Drezner~\cite{drezner1996facility} and Mirchandani \& Francis~\cite{mirchandani1990discrete}. 
Similarly to how {\sc Independent Set} and {\sc Vertex Cover} are related, a dual problem to \disp{} is the {\sc $\delta$-Covering} problem, which aims to cover all points of a graph using the minimum number of points.
For this problem, Tamir~\cite{Tamir87} and Megiddo \& Tamir~\cite{MegiddoT1983} provided polynomial-time algorithms for special graph classes and proved \NP-hardness for $\delta = 2$. 
Subsequently, Hartmann et al.~\cite{DBLP:journals/mp/HartmannLW22} demonstrated that {\sc $\delta$-Covering} is solvable in polynomial time for unit fraction values of $\delta$ and remains \NP-hard otherwise, while also resolving the parameterized complexity of the problem for all rational $\delta > 0$ with the natural parameter solution size. 
A work by Hartmann \& Janßen~\cite{DBLP:conf/waoa/HartmannJ24} analyzed the approximation properties of {\sc $\delta$-Covering} for all rational $\delta > 0$. 
In the same model, Frei et al.~\cite{DBLP:conf/isaac/FreiGHHM24} investigated a different problem known as {\sc $\delta$-Tour}, where the objective is to traverse the graph such that every point lies within a distance at most $\delta$ from the tour. 
They established its \NP-hardness for all rational $\delta > 0$ and analyzed its approximation properties and parameterized complexity. 
In \cite{DBLP:conf/stacs/HartmannM25}, Hartmann et al.\ study independent set and dominating set problems in bounded treewidth graphs with integer and real distances.

\paragraph*{Our Results}

\newcommand{\putmarkx}[1]{
	\pgfmathsetmacro{\xval}{(\x-0.6667)*7.5}
	\draw (\xval,0) -- (\xval,-0.15);
	\node (putmarkx) at (\xval, -0.5) {$#1$};
}
\newcommand{\putmarky}[1]{
	\pgfmathsetmacro{\yval}{(\y-1)*2.5}
	\draw (-0.2, \yval) -- (0, \yval);
	\node (putmarky) at (-0.5+0.1*\c, \yval) {$#1$};
}

\begin{figure}[t]
	\centering
	\begin{tikzpicture}[yscale=1.0,xscale=1.0]
		\def\colora{blue}
		\def\colorb{teal}
		\def\colorc{purple}
		\def\colord{orange}
		% coord system
		\draw[->, ultra thick] (-0.2, 0) -- (10.5, 0);
		\node at (11, -0.5) {$\delta$};
		\draw[->, ultra thick] (0, -0.2) -- (0, 3);
		\node[rotate=90] at (-1, 2) {ratio};
		
		\def\x{1}\putmarkx{1}
        \def\x{2}\putmarkx{2}
		\foreach \a/\b in {2/3, 4/5, 6/7, 3/2, 5/4, 7/6, 5/3, 9/5, 13/7}{
			\def\x{\a/\b}\putmarkx{\frac{\a}{\b}}
		}
		
		\def\c{0}
        \def\y{2}\putmarky{2}
        \def\y{1}\putmarky{1}
		\foreach \a/\b/\c in {4/3/1, 3/2/0} {
			\def\y{\a/\b}\putmarky{\frac{\a}{\b}}
		}

        % 2 approx baseline
		\pgfmathsetmacro{\from}{(1.5-0.6667)*7.5}
		\pgfmathsetmacro{\to}{(2-0.6667)*7.5}
		\draw[-, color=\colord, ultra thick] (\from, 2.5) -- node[above, xshift=30] {\Cref{lemma:approxLargerThreeHalves}} (\to, 2.5);
        
		% constructions
		\foreach \x in {1, ..., 100} {
			% approaching 2
			\pgfmathsetmacro{\from}{((((4*\x+1)/(2*\x+1))-0.6667)*7.5}
			\pgfmathsetmacro{\to}{(((((4*\x+5)/(2*\x+3))-0.6667)*7.5}
			\pgfmathsetmacro{\val}{(((\x+1)/\x)-1)*2.5}
			\ifthenelse{\x=1}{
				\draw[-, color=\colora, thick] (\from, \val) -- node[below, yshift=-1] {\Cref{lemma:approxLargerOneApproachingTwo}} (\to, \val);
			}{
				\draw[-, color=\colora, ultra thick] (\from, \val) -- (\to, \val);
			}
            \ifthenelse{\x<4}{
                \draw[-, color=\colora, thick] (\to, \val-0.1) -- (\to, \val+0.1);
            }{}
			
			% approaching 1 from above
			\pgfmathsetmacro{\from}{((((2*\x+1)/(2*\x))-0.6667)*7.5}
			\pgfmathsetmacro{\to}{(((((2*\x+3)/(2*\x+2))-0.6667)*7.5}
			\pgfmathsetmacro{\val}{(((\x+1)/\x)-1)*2.5}
			\ifthenelse{\x=1}{
				\draw[-, color=\colorb, ultra thick] (\from, \val) -- node[above, yshift=-1] {\Cref{lemma:approxLargerOneApproachingOne}} (\to, \val);
			}{
				\draw[-, color=\colorb, ultra thick] (\from, \val) -- (\to, \val);
			}
            \ifthenelse{\x<5}{
                \draw[-, color=\colorb, thick] (\from, \val-0.1) -- (\from, \val+0.1);
            }{}
			
			% approaching 1 from below
			\pgfmathsetmacro{\from}{((((2*\x)/(2*\x+1))-0.6667)*7.5}
			\pgfmathsetmacro{\to}{(((((2*\x+2)/(2*\x+3))-0.6667)*7.5}
			\pgfmathsetmacro{\val}{(((\x+2)/(\x+1))-1)*2.5}
			\ifthenelse{\x=1}{
				\draw[-, color=\colorc, ultra thick] (\from, \val) -- node[above, yshift=-1, xshift=10] {\Cref{lemma:approxSmallerOne}} (\to, \val);
			}{
				\draw[-, color=\colorc, ultra thick] (\from, \val) -- (\to, \val);
			}
            \ifthenelse{\x<4}{
                \draw[-, color=\colorc, thick] (\to, \val-0.1) -- (\to, \val+0.1);
            }{}
 		}

	\end{tikzpicture}
	\caption{Upper bounds on the approximation ratio of \disp{} plotted for $\delta \in (\frac{2}{3},2)$. Small vertical lines indicate that that interval end is closed, while the others are open. For better visibility, these small vertical lines have been omitted for the very short intervals.}
	\label{fig:approximations}
\end{figure}

We fully analyze the approximation hardness of \disp.
For $\delta > 2$, we show \pAPX-hardness, while for all $\delta \leq 2$ that cannot be written as $\frac{1}{x}$ or $\frac{2}{x}$ for some natural number $x$, we show \APX-hardness.
We further show that, for all $\delta \leq 2$, the problem admits constant-factor approximations.
As mentioned before, due to the fact that \disp{} is solvable in polynomial time for all $\delta$ with $\delta = \frac{1}{x}$ or $\delta = \frac{2}{x}$ for some natural number $x$, we expect that the approximation factor approaches $1$ as $\delta$ approaches a polynomial-time solvable case.
\begin{itemize}
    \item For the interval $(1, 2)$, this is true both when $\delta$ approaches $1$ from above or $2$ from below. 
    In particular, we provide an approximation algorithm for $\delta < \frac{2k+1}{2k}$ that achieves an approximation factor of $\frac{k+1}{k}$, and show that for $\delta > \frac{4k+1}{2k+1}$ a $\delta$-dispersed set is at most $\frac{k+1}{k}$ times larger than a $2$-dispersed set, which can be computed in polynomial time.
    \item Maybe surprisingly, for the interval $(\frac{2}{3}, 1)$ this is only true when $\delta$ approaches $1$ from below.
    We show that for $\delta > \frac{2k}{2k+1}$ a $\delta$-dispersed set is at most $\frac{k+2}{k+1}$ times larger than a $1$-dispersed set, which can be computed in polynomial time.
    On the other hand, our \APX-hardness reductions yield a lower bound of $(1+c)$ on the approximation ratio for the entire interval $(\frac{2}{3}, \frac{3}{4}]$ for some fixed $c \in \mathbb R^+$, ruling out the existence of efficient algorithms with an approximation factor approaching $1$ as $\delta$ approaches $\frac{2}{3}$ from above.
\end{itemize}
Except for \Cref{lemma:approxLargerOneApproachingTwo}, we also show that the approximation ratios we achieve are best possible using our approaches.
The approximation ratios we achieve are plotted in \Cref{fig:approximations}.
Due to a translation result for $\delta$ by Hartmann et al.~\cite{DBLP:conf/mfcs/HartmannL22}, we may focus on the intervals $(\frac{2}{3}, 1)$ and $(1, 2)$ and use their result to extend our bounds to all other smaller intervals in $(0, \frac{2}{3})$.

The paper is organized as follows.
In \Cref{sec:preliminaries}, we introduce basic notations and existing problem definitions.
In \Cref{sec:approximations}, we present our approximation schemes, and in \Cref{sec:hardness}, we present our approximation hardness results. 
\section{Preliminaries}
\label{sec:preliminaries}

We use $\mathbb N^+$ and $\mathbb R^+$ to denote the positive natural (resp. real) numbers.
For a $k \in \mathbb N^+$ we use $[k] \coloneq \{1, \dots, k\}$.
A \emph{unit fraction} is a rational number that can be written as $\frac{1}{k}$ for a $k \in \mathbb N^+$.
Unless otherwise stated, we use $\delta$ to refer to a positive \emph{rational} number.

\subparagraph{Graph Theory}
We consider simple undirected graphs whose edges have unit length.
For a graph $G$, we denote the vertex set with $V(G)$ and the edge set with $E(G)$.
For a set $U \subseteq V(G)$ we denote with $G[U]$ the subgraph of $G$ induced by the vertices of $U$.
Further we denote with $G \setminus U$ the graph $G[V(G) \setminus U]$, and if $U$ contains only one element $u$, we may instead write $G \setminus u$.
For a graph $G$ and integer $\ell \in \mathbb N$, we denote with $G_\ell$ the $\ell$-subdivision of $G$, that is, every edge in $G$ is replaced by a path with $\ell$ edges.

We use the notation of~\cite{DBLP:journals/algorithmica/GrigorievHLW21,DBLP:conf/waoa/HartmannJ24} to describe the metric space of a graph:
For an edge $\{u, v\}$ and a rational $\lambda \in [0,1]$, we denote by $p(u, v, \lambda)$ the point on the edge $\{u, v\}$ that has distance $\lambda$ to $u$.
Thus $p(u, v, \lambda) = p(v, u, 1 - \lambda)$, and we will often assume that $\lambda \leq 1/2$.
Further, $p(u, v, 0) = u$ and $p(u, v, 1) = v$ and we also use $u$ to refer to the point on $u$.
We denote the continuum set of all points on the graph $G$ with $P(G)$.
With the word \emph{point} we refer to the elements of $P(G)$, while we use \emph{vertex} in the graph-theoretic sense.
The distance between two points $p, q \in P(G)$, denoted as $dist(p,q)$, is defined as the length of the shortest path from $p$ to $q$ in the underlying metric space.
The \emph{open ball} with radius $r$ around point $p$, denoted as \oball{p}{r}, is the set of points $q$ with distance $dist(q,p) < r$ from $p$.
The \emph{closed ball} with radius $r$ around point $p$, denoted as \cball{p}{r}, is the set of points $q$ with distance $dist(q,p) \leq r$ from $p$.

Additionally, for a point $p = p(u,v,\lambda)$ we say that $p$ is \emph{on} the edge $\{u,v\}$ or that $\{u,v\}$ \emph{contains} $p$.
In particular, a point $u$ for $u \in V(G)$ is on every edge incident to $u$.
Further, for a point $p = p(u,v,\lambda)$ with $0 < \lambda < 1$, we say that $p$ is \emph{on the interior} of the edge $\{u, v\}$.

\subparagraph{Approximation Preserving Reductions}
Depending on the value of $\delta$, \disp{} resides in the classes \APX{} or \pAPX.
\APX{} are those \NP{} optimization problems that admit a constant-factor approximation in polynomial time, while problems in \pAPX{} only allow for an approximation factor bounded by a polynomial of the input size.

To show our hardness results, we use two types of reductions: L-reductions and S-reductions, as defined in~\cite{crescenzi1997short}.
Both (in our case) are a reduction from a maximization problem $\Pi_1$ to maximization problem $\Pi_2$ given by a pair of functions $f, g$ with the following properties:
\begin{itemize}
	\item The functions $f$ and $g$ are computable in polynomial time.
	\item If $I_1$ is an instance of $\Pi_1$, then $f(I_1)$ is an instance of $\Pi_2$.
	\item If $S_2$ is a solution to an instance $f(I_1)$ of $\Pi_2$, then $g(I_1, S_2)$ is a solution to $I_1$.
\end{itemize}
We use $\OPT_{\Pi}(\cdot)$ to refer to the gain of the optimal solution for $\Pi$ of a given instance, and $\text{gain}_{\Pi}(\cdot)$ to refer to the gain for $\Pi$ of a given solution.
An L-reduction additionally fulfills the following two properties:
\begin{itemize}
	\item There is a constant $\alpha \in \mathbb R^+$ with $\OPT_2(f(I_1)) \leq \alpha \cdot \OPT_1(I_1)$ for all instances $I_1$ of $\Pi_1$.
	\item There is a constant $\beta \in \mathbb R^+$ satisfying $|\text{gain}_1(g(I_1, S_2)) - \OPT_1(I_1)| \leq \beta \cdot |\text{gain}_2(S_2) - \OPT_2(f(I_1))|$ for all instances $I_1$ of $\Pi_1$ and for all solutions $S_2$ of $f(I_1)$.
\end{itemize}
If $\Pi_2$ admits an approximation ratio of $1 + c$, then $\Pi_1$ admits an approximation ratio of $1 + \alpha\beta 
c$ by transformation of the above inequalities.
On the other hand, an S-reduction additionally fulfills the following properties:
\begin{itemize}
	\item For every instance $I_1$ of $\Pi_1$, it holds that $\OPT_2(f(I_1)) = \OPT_1(I_1)$.
    \item For every instance $I_1$ of $\Pi_1$ and every solution $S_2$ of $f(I_1)$, it holds that $\text{gain}(I_1,g(I_1, S_2)) = \text{gain}(f(I_1), S_2)$.
\end{itemize}
For a more in-depth introduction to approximation preserving reductions, we refer the reader to a survey by Crescenzi \cite{crescenzi1997short}.

\subparagraph{Relevant Previous Results}
To obtain a $c$-approximation of \dispn, it suffices to compute a $c$-approximation for every connected component of $G$.
Further, Hartmann et al.~\cite{DBLP:conf/mfcs/HartmannL22} showed that \disp{} admits an \XP-algorithm and is thus solvable in polynomial time on trees.
Therefore, it suffices to consider connected non-tree graphs as input.

\begin{theorem}[Grigoriev et al.~\cite{DBLP:journals/algorithmica/GrigorievHLW21}]
    \label{lemma:polyTimeUnitFraction}
    Let $\delta$ be a unit fraction and $G$ not a tree.
    Then $\dispn = k \cdot |E(G)|$ and an optimal $\delta$-dispersed set $S^*$ contains exactly $k$ points of every edge.
\end{theorem}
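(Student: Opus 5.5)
Throughout, let $\delta = \frac{1}{k}$ with $k \in \mathbb N^+$, and fix a connected graph $G$ that is not a tree, so $|E(G)| \geq |V(G)|$. The proof splits into an upper bound, $\dispn \leq k \cdot |E(G)|$, and a matching construction. The edge-wise statement about $S^*$ will come out of the upper bound argument.

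\emph{Upper bound.} I would use a charging argument. Orient each edge $\{u,v\}$ of $G$ and view it as the half-open segment $(u,v]$ that excludes its tail $u$. If every vertex is the head of at most one edge, these segments are pairwise disjoint, except that a vertex with in-degree $0$ is covered by none of them. Each half-open segment of length $1$ contains at most $k$ points with pairwise distance at least $\frac1k$. This holds because points on a single edge are at distance at least their distance along the edge, and $k+1$ points in a half-open interval of length $1$ would force a gap smaller than $\frac1k$. For a general orientation I need a more careful accounting. I would argue the count directly as follows.
\begin{itemize}
\item If $G$ contains a cycle, $G$ has an orientation in which every vertex has in-degree at least $1$: orient the cycle cyclically and the remaining edges away from the cycle along a BFS tree.
\item Assign each vertex $v$ to one chosen incoming edge, so that $v$ is counted with that edge.
\item Each edge $e = (u,v)$ then receives the points in its interior plus possibly its head $v$, i.e.\ the points of $S$ lying on a half-open segment of length $1$. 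This gives at most $k$ points per edge.
\end{itemize}
Every point of $S$ lies either in the interior of exactly one edge or on one vertex, so every point is counted exactly once. Hence $|S| \leq k|E(G)|$.

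\emph{Lower bound.} I would use the same orientation. On each edge $(u,v)$, place the $k$ points $p(u,v,\frac{i}{k})$ for $i = 1,\dots,k$. This places each vertex exactly when it is the head of its chosen edge, and I take each vertex only once. The difficulty is that distances across a vertex can be short. For example, if $v$ has an incoming edge with a point at distance $\frac1k$ before $v$ and an outgoing edge with a point at distance $\frac1k$ after $v$, the distance $\frac2k$ is fine. However, two edges both pointing away from $v$, with points at distance $\frac1k$ from $v$ and $v$ itself not chosen, are only $\frac2k$ apart, which is also fine. So the constraint is only violated when a vertex is selected, or when two selected points are closer than $\frac1k$. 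I would make this rigorous as follows: every selected point sits at a position that is an integer multiple of $\frac1k$ along its edge. Two distinct such grid points are at graph distance a positive multiple of $\frac1k$ unless they coincide. The double counting of a vertex is prevented because each vertex is the head of exactly one chosen edge. On non-chosen incoming edges $(w,v)$, I would instead place the points at positions $\frac{i}{k} - \frac{1}{2k}$ to avoid $v$. These shifted points are at distance at least $\frac{1}{2k} + \frac{1}{2k}$ or more from other grid points. I need to check this; it is the main obstacle, and shifting by half a step on all edges, using in-degree-$1$ choices, is what I would try first.

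\emph{Structure of $S^*$.} Equality $|S^*| = k|E(G)|$ forces equality in every per-edge bound, so each edge carries exactly $k$ points of $S^*$. This gives the claimed structure.
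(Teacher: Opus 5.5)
The paper does not prove this statement; it quotes it from Grigoriev et al. Judged on its own, your upper bound is correct. Since $G$ is connected and contains a cycle, it has an orientation with every in-degree at least one. Charging each vertex to one incoming edge partitions $P(G)$ into half-open unit segments, and each segment holds at most $k$ points of a $\frac{1}{k}$-dispersed set. One small correction: the inequality you need there is the trivial one, graph distance $\le$ distance along the edge. Your sentence states the reverse, which happens to hold in simple unit-length graphs but is not what the argument uses.

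The genuine gap is the lower bound. You leave it explicitly open, and the scheme you do describe fails.

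\emph{The grid scheme is too small.} Taking the grid points $p(u,v,\frac{i}{k})$, $i\in[k]$, with each vertex counted once gives only $|V(G)|+(k-1)|E(G)|$ points. This is strictly less than $k|E(G)|$ whenever $|E(G)|>|V(G)|$; for $K_4$ and $k=1$ it gives $4$ instead of $6$.

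\emph{Your repair is infeasible.} You shift the non-chosen incoming edges $(w,v)$ to positions $\frac{2i-1}{2k}$. But every vertex is the head of its chosen edge and is therefore selected. So the shifted points at $\frac{1}{2k}$ and $\frac{2k-1}{2k}$ lie at distance $\frac{1}{2k}<\frac{1}{k}$ from the selected vertices $w$ and $v$.

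\emph{The missing idea.} Use no vertices and no orientation for the construction: on every edge $\{u,v\}$ place $p(u,v,\frac{2i-1}{2k})$ for $i\in[k]$. Every such point is at distance at least $\frac{1}{2k}$ from every vertex. Two points on the same edge are at least $\frac{1}{k}$ apart along it. Any other route between them leaves the edge through an endpoint and re-enters through one, so it has length at least $\frac{1}{2k}+\frac{1}{2k}$. Any path between points on different edges passes through a vertex and likewise has length at least $\frac{1}{k}$. This gives $k|E(G)|$ points in every graph, with exactly $k$ in the interior of each edge, and together with your upper bound proves the equality.

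\emph{The structural claim.} The same construction is the right witness for it, and the claim has to be read existentially. Your equality argument only shows that each charged half-open segment carries exactly $k$ points of an optimal set. Indeed, for $k=1$ the three vertices of a triangle form an optimal set with no interior points at all.
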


\begin{theorem}[Grigoriev et al.~\cite{DBLP:journals/algorithmica/GrigorievHLW21}]
    \label{lemma:polyTimeTwiceUnitFraction}
    Let $\delta = \frac{2}{k}$ for some $k \in \mathbb N$.
    Then \dispn{} can be computed in polynomial time for any graph $G$.
\end{theorem}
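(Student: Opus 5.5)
The plan is to reduce every $\delta = \frac{2}{k}$ to the single case $\delta = 2$ and then solve that case by a matching-type computation. First, scaling: in the $k$-subdivision $G_k$ every edge of $G$ becomes a path of $k$ unit edges, and $P(G)$ and $P(G_k)$ are the same metric space up to multiplying all distances by $k$. Hence a set is $\frac{2}{k}$-dispersed in $G$ if and only if it is $2$-dispersed in $G_k$, so $\dispn[\tfrac{2}{k}][G] = \dispn[2][G_k]$. Since $G_k$ has polynomial size, it suffices to compute $\dispn[2][H]$ in polynomial time for an arbitrary graph $H$. I would also compute each connected component separately.

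Second, I would normalize an optimal $2$-dispersed set $S$. The goal is to show that, without loss of generality, every point of $S$ is either a vertex or the midpoint of an edge. Take a point $p$ in the interior of an edge $\{u,v\}$. Any other point $q\in S$ has distance at least $2$ from $p$, so it lies at distance at least $1$ from both $u$ and $v$ once we pass through the nearer endpoint. From this one checks that $p$ can be slid to the midpoint of $\{u,v\}$, or to an endpoint, without creating a conflict. I would check the distance inequalities carefully, because distances in this metric are minima over the two endpoints.

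Third, I would characterize which normalized configurations are feasible and reduce them to a graph problem. After normalization, $S$ consists of a vertex set $I$ and a set $M$ of edges (whose midpoints are in $S$), subject to three conditions:
\begin{itemize}
\item two vertex points are non-adjacent, although they may share a common neighbour (a star $K_{1,r}$ has $r$ pairwise $2$-dispersed leaves);
\item two midpoints lie on non-adjacent edges, so $M$ is a matching;
\item a midpoint of $\{u,v\}$ and a vertex $x$ satisfy $d(x,\{u,v\})\ge 2$.
\end{itemize}
To make this tractable, I would charge each vertex point $x\in I$ to an incident edge $\{x,w\}$. The resulting edge set $F$ is then a disjoint union of stars: single edges for midpoints, and stars centred at some $w$ with leaves in $I$. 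The objective $|S|$ becomes the number of leaves or edges of $F$. The hope is to encode this as a degree-constrained subgraph problem, namely a general factor or $b$-matching with degree bounds $\{0,1\}$ on leaves and arbitrary degree on centres, together with suitable gadgets enforcing the separation conditions. Such problems are solvable in polynomial time via Edmonds' matching algorithm or its Gallai--Edmonds extensions.

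The main obstacle is this last step. Some separation conditions involve distance $2$ in the graph (for instance, a midpoint edge must not touch $N[I]$), and these are not local degree constraints. So I would first try to show that an optimal solution can be locally exchanged until these conditions become automatic. For example, a vertex point adjacent to a matched edge can be replaced by a midpoint without loss, leaving only "stars with pairwise non-adjacent centres and leaves" to optimize. If that exchange argument works, the remaining problem should be a clean factor problem; if it does not, I would fall back on the dedicated combinatorial argument of Grigoriev et al.~\cite{DBLP:journals/algorithmica/GrigorievHLW21}.
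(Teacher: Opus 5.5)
\paragraph{Main gap: the third step.}
Your first step is exactly the paper's Subdivision Reduction and is fine. The trouble is the third step. That is where the whole content of the theorem lies, and you do not carry it out; the route you sketch for it also does not work. The paper gives no proof of this statement; it only cites Grigoriev et al. Falling back on their argument therefore just re-cites the result.

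After normalization, the task is to maximize $|I|+|M|$ over independent sets $I$ and matchings $M$ with $V(M)\cap N[I]=\emptyset$. For fixed $I$ the best $M$ is simply a maximum matching of $G-N[I]$, so the whole difficulty is choosing $I$. Your degree-constrained-subgraph encoding cannot capture this, for three reasons.
\begin{itemize}
\item Which vertices are leaves and which are centres is not fixed in advance. Being a star forest is also not a per-vertex degree condition: on a path with four vertices, any degree sets allowing all three two-edge star forests also allow the whole path.
\item Without the separation conditions, the maximum number of edges of a star forest is $|V(G)|-\gamma(G)$. This is \NP-hard to compute, and it overcounts: a triangle gives $2$, but holds only one point. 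So the non-local conditions are what make the problem tractable; they cannot be added afterwards by gadgets.
\item Your exchange does not help. A vertex point adjacent to a matched edge is already forbidden by your own third condition. Moreover, the intended normal form with pairwise non-adjacent centres is false. Take the double star: adjacent $c_1,c_2$, each with $r\ge 2$ pendant leaves. Its only optimal solution is the set of all $2r$ leaves, whose centres $c_1,c_2$ are adjacent.
\end{itemize}

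\paragraph{What would close it.}
You need a genuinely matching-theoretic argument. For example, let $(D,A,C)$ be the Gallai--Edmonds decomposition of $G$, let $D_1$ be the set of isolated vertices of $G[D]$, and let $B$ be the bipartite graph between $D_1$ and $A$. One can show $\dispn[2][G]=\nu(G)+|D_1|-\nu(B)$, which is computable in polynomial time.
\begin{itemize}
\item Lower bound: take $I=Y$ for a $Y\subseteq D_1$ maximizing $|Y|-|N(Y)|$, and let $M$ be a maximum matching of $G-N(Y)$, which has size at least $\nu(G)-|N(Y)|$. This is feasible because $N(Y)\subseteq A$.
\item Upper bound: fix a maximum matching $P$. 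Charge each $x\in I$ to $\{x,P(x)\}$, or to $\{x\}$ if $P$ misses $x$, and charge each $e\in M$ to its endpoints. These sets are pairwise disjoint, which gives $|I|+|M|\le\nu(G)+|I\cap W_P|+\tfrac12|W_P\cap V(M)|$, where $W_P$ is the set of vertices missed by $P$.
\item Choosing $P$: the Gallai--Edmonds structure lets you pick $P$ so that $W_P\cap V(M)=\emptyset$ and $|I\cap W_P|\le|D_1|-\nu(B)$. Concretely, every component of $G[D]$ contained in $V(M)$ is matched into $A$, and as many vertices of $I\cap D_1$ as possible are matched; the required $A$-partners are distinct because $V(M)\cap N(I)=\emptyset$. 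In every other non-singleton component, the missed vertex is chosen outside $I\cup V(M)$.
\end{itemize}

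\paragraph{A smaller issue: the normalization step.}
Your normalization fails if done one point at a time. On the path $a'\,a\,u\,v\,b\,b'$, place three points: on $\{a,a'\}$ at distance $0.7$ from $a$, on $\{u,v\}$ at distance $0.3$ from $u$, and on $\{b,b'\}$ at distance $0.3$ from $b$. Moving the middle point alone to $u$, to $v$, or to the midpoint creates a distance of $1.7$, $1.3$ or $1.8$, respectively.

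What works is moving all interior points to the midpoints of their edges simultaneously. This is valid because $2$-dispersedness already forces two things: interior points lie on pairwise disjoint edges, and vertex points avoid $N[u]\cup N[v]$ for every occupied edge $\{u,v\}$.
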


Next we show that two self-reductions for \disp{} introduced by Grigoriev et al.~\cite{DBLP:journals/algorithmica/GrigorievHLW21} and Hartmann et al.~\cite{DBLP:conf/mfcs/HartmannL22} respectively are approximation preserving.

\begin{definition}[Subdivision Reduction]
    \label{def:subdivisionReduction}
    This reduction takes $x \in \mathbb N^+$ as a parameter.
    We define $f_x^S(G) = G_x$.
    For an $(x\delta)$-dispersed set $S'$ in $G$ we define $g_x^S(G, S') = S$ as the set containing 
    \begin{itemize}
        \item every point in $S' \cap V(G)$, and
        \item for every point $p' \in S' \setminus V(G)$, the point $p = p(u, v, \frac{\lambda}{x}) \in S$ where $u, v$ are the two closest vertices in $V(G_x) \cap V(G)$ and $\lambda$ is the distance from $u$ to $p'$ in $G_x$.
    \end{itemize}
\end{definition}

The graph $G$ has an optimal $\delta$-dispersed set of size $k$ if and only if the graph $G_x$ has an optimal $(x\delta)$-dispersed set of size $k$ \cite{DBLP:journals/algorithmica/GrigorievHLW21}.
The proof from \cite{DBLP:journals/algorithmica/GrigorievHLW21} immediately yields the following result:

\begin{lemma}
    \label{lemma:subdivisionReduction}
    The Subdivision Reduction is an $L$-reduction for $\alpha = \beta = 1$.
\end{lemma}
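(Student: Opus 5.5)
The plan is to check the two L-reduction conditions directly, using the fact that $f_x^S$ and $g_x^S$ are inverse to each other at the level of points. Here $\Pi_1$ is \disp{} on $G$, and $\Pi_2$ is \disp[x\delta] on $G_x$. Both maps are clearly polynomial-time computable, since $x$ is a fixed constant, and $G_x$ has $|V(G)| + (x-1)|E(G)|$ vertices.

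The key step is to identify $P(G)$ with $P(G_x)$. Scaling all lengths by $x$ gives a bijection $\phi\colon P(G) \to P(G_x)$. A point $p(u,v,\mu)$ on an edge $\{u,v\}$ of $G$ maps to the point at distance $x\mu$ from $u$ along the path of $G_x$ that replaces $\{u,v\}$. Vertices of $G$ map to themselves. By the definition, $g_x^S(G,\cdot)$ is exactly the pointwise inverse $\phi^{-1}$.

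I will then show that $dist_{G_x}(\phi(p),\phi(q)) = x \cdot dist_G(p,q)$ for all points $p, q$. The metric space $P(G_x)$ is obtained from $P(G)$ by stretching every edge to length $x$; the new subdivision vertices only add degree-2 breakpoints. Hence every path in one space corresponds to a path in the other whose length is scaled by exactly $x$. In particular, shortest paths correspond, including the case where $p$ and $q$ lie on the same edge.

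From this, a set $S' \subseteq P(G_x)$ is $(x\delta)$-dispersed if and only if $\phi^{-1}(S')$ is $\delta$-dispersed. Also $|\phi^{-1}(S')| = |S'|$, because $\phi$ is a bijection. Applying this in both directions gives $\OPT_2(G_x) = \OPT_1(G)$, so $\alpha = 1$. For every solution $S'$ we also get $\text{gain}_1(g_x^S(G,S')) = |S'| = \text{gain}_2(S')$. Therefore the two error terms are equal, and $\beta = 1$. In fact the reduction is even an S-reduction.

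The only real obstacle is checking that the scaling of distances is exact, and not merely an inequality. This reduces to the observation that points of $P(G_x)$ and $P(G)$ correspond and that paths correspond with lengths scaled by $x$. This is the same argument as in \cite{DBLP:journals/algorithmica/GrigorievHLW21}, so I would simply invoke it and note that it applies to arbitrary feasible sets, not only optimal ones.
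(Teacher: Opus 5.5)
Your proposal is correct and takes the same route as the paper. The paper simply notes that the argument of Grigoriev et al.~\cite{DBLP:journals/algorithmica/GrigorievHLW21} carries over: $P(G_x)$ is $P(G)$ with all lengths scaled by $x$, and $g_x^S$ is the inverse of this scaling, so feasible solutions correspond bijectively with equal size, giving equal optima and equal gains and hence $\alpha=\beta=1$. Your remark that this is even an S-reduction is also right.
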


The second self-reduction due to Hartmann et al.~\cite{DBLP:conf/mfcs/HartmannL22} transforms $\delta$ such that a $\delta'$-dispersed set places exactly one additional point on every edge.
Technically, the following reduction is only properly defined for $\delta$-{\sc AutoDispersion}.
For a formal definition, we refer the reader to \cite{DBLP:conf/mfcs/HartmannL22}.
$\delta$-{\sc AutoDispersion} is a variant of \disp{} that additionally requires that every point, intuitively speaking, is not too close to itself.
For example for $\delta > 3$, placing a point $p$ on a cycle of length $3$ means that there is a locally injective\footnote{Intuitively speaking, a \emph{locally injective walk} is a walk where at least one edge is only traversed once.} walk from $p$ to itself of length less than $\delta$.
Luckily, for every $\delta \leq 3$, any $\delta$-dispersed set is also $\delta$-auto-dispersed, and additionally it holds that $\autodispn \leq \dispn$, so the following results are still useful to us.

\begin{definition}[Translation Reduction]
    \label{def:translationReduction}
    We define $f^T(G) = G$.
    For a $\frac{\delta}{\delta+1}$-dispersed set $S'$ of $G$, we define $g^T(G,S') = S \subseteq P(G)$ as follows:
    For every edge $\{u,v\} \in E(G)$,
    \begin{itemize}
        \item if $S'$ contains exactly one point\footnote{Hartmann et al.~\cite{DBLP:conf/mfcs/HartmannL22} show that we may w.l.o.g.\ assume that $S'$ contains at least one point from the interior of every edge, since $\frac{\delta}{\delta+1} < 1$.} from the interior of $\{u,v\}$, then $S$ contains no point from the interior of $\{u,v\}$; and 
        \item if $S'$ contains the points $\{p_1',\dots,p_k'\}$ from the interior of $\{u,v\}$ for a $k \geq 2$, then $S$ contains the $k-1$ points $\{p_1,\dots,p_{k-1}\}$ where $p_i = (u,v,\lambda_i)$ and $\lambda_1 = \mu \cdot (\delta+1)$ and $\lambda_{i+1} = \lambda_i + \delta$ for $i \in \{2, \dots, k-2\}$ and $\mu = \min_{i \in [k]}d(p_i,u)$.
    \end{itemize}
    Further, every point in $S'$ that is on a vertex, is also in $S$.
\end{definition}

\begin{theorem}[Hartmann et al.~\cite{DBLP:conf/mfcs/HartmannL22}]
    \label{theorem:translation}
    The function $f^T$ is a polynomial-time reduction from \autodisp{} to \autodisp[\frac{\delta}{\delta+1}] such that $g^T(G,S')$ is a $\delta$-auto-dispersed set in $G$ for every $\frac{\delta}{\delta+1}$-auto-dispersed set $S'$ in $G$.
\end{theorem}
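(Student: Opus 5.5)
The plan is to compare walks in $G$ between points of $S=g^T(G,S')$ with chains of points of $S'$ along the same walk. The key identity throughout is $\delta'(\delta+1)=\delta$, where $\delta'=\frac{\delta}{\delta+1}<1$. Computability of $f^T$ and $g^T$ in polynomial time is immediate, so the work is to show that $S$ is $\delta$-auto-dispersed. Following the footnote in \Cref{def:translationReduction}, I assume w.l.o.g.\ that every edge carries at least one interior point of $S'$.

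\textbf{Step 1 (one edge).} Fix an edge $\{u,v\}$ whose interior contains $k\ge 2$ points of $S'$. Let the point closest to $u$ be at distance $\mu$ from $u$, and the point closest to $v$ at distance $\nu$ from $v$. Dispersion along the edge gives
\[ \mu+(k-1)\delta'+\nu\le 1. \]
In $S$ the first point is at distance $\mu(\delta+1)$ from $u$, and the remaining $k-2$ points follow at spacing exactly $\delta$. So the last point has distance $1-\mu(\delta+1)-(k-2)\delta$ to $v$. Using the displayed inequality, this is at least
\[ (\delta+1)\bigl(1-(k-1)\delta'-\mu\bigr)-(k-2)\delta+\mu(\delta+1)-\mu(\delta+1)\;\ge\;\nu(\delta+1), \]
since $(\delta+1)(1-(k-1)\delta')+ (k-2)\delta\cdot(-1)=1$ after cancellation. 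Hence each edge is consistently realised with margin $\mu(\delta+1)$ at $u$ and at least $\nu(\delta+1)$ at $v$, and points on the same edge are $\delta$ apart. For $k=1$ the edge is empty in $S$. Vertex points of $S'$ have margin $0$, which scales consistently.

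\textbf{Step 2 (different edges).} Take $p,q\in S$ and a (locally injective) walk $W$ between them. Let $p$ lie at distance $a(\delta+1)$ from the first vertex of $W$ and $q$ at distance $b(\delta+1)$ from the last vertex, where $a,b$ are the margins of the extreme $S'$-points $p',q'$ on those edges. Suppose $W$ traverses $m\ge 0$ full edges $e_1,\dots,e_m$ in between, so $|W|=(a+b)(\delta+1)+m$. By assumption each $e_i$ carries an interior point $r_i'\in S'$, splitting $e_i$ into pieces of lengths $c_i$ and $d_i$ with $c_i+d_i=1$. The sub-walks of $W$ between consecutive points $p',r_1',\dots,r_m',q'$ are walks between points of $S'$, so each has length $\ge\delta'$:
\[ a+c_1\ge\delta',\quad d_i+c_{i+1}\ge\delta',\quad d_m+b\ge\delta'. \]
Summing gives $a+b+m\ge (m+1)\delta'$. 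Therefore
\[ |W|\ge(\delta+1)\bigl((m+1)\delta'-m\bigr)+m=(m+1)\delta-m\delta=\delta. \]
The case where $p$ or $q$ is a vertex of $S'$ is the same with margin $0$.

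\textbf{Main obstacle.} I expect the main obstacle to be the auto-dispersion condition, i.e.\ closed locally injective walks from a point to itself and walks that revisit edges. I would handle these with the same chain argument, applied to the sequence of $S'$-points met along the walk. Here I need local injectivity of each sub-walk between consecutive $S'$-points, so that the $\delta'$-auto-dispersion of $S'$ applies to each link, including links from a point back to itself. I would also double-check that the footnote's w.l.o.g.\ assumption is compatible with auto-dispersion.
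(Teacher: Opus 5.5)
The paper gives no proof of its own here; the theorem is imported from Hartmann et al. So your argument must carry the whole statement, and it does not. You read ``$f^T$ is a polynomial-time reduction'' as mere polynomial computability, but correctness is required too. With $f^T(G)=G$ and $\delta'=\frac{\delta}{\delta+1}$, correctness means: $G$ has a $\delta$-auto-dispersed set of size $k$ if and only if it has a $\delta'$-auto-dispersed set of size $k+|E(G)|$. The paper itself calls the resulting equality a \emph{weaker} formulation of this theorem. Your solution map gives only the ``if'' direction, together with the count $|g^T(G,S')|=|S'|-|E(G)|$, which you should state and which uses the interior-point normalization. The ``only if'' direction needs the inverse construction: undo $g^T$ edge by edge, adding one point to every edge (including edges carrying no point of $S$), and prove the result is $\delta'$-dispersed. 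This is exactly where the auto hypothesis is indispensable. On a triangle with $\delta=4$, a single point is $4$-dispersed, yet at most $3<1+3$ points are $\frac45$-dispersed. Nothing in your proposal uses auto-dispersion of the $\delta$-side set, so it cannot establish the reduction.

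The direction you do treat is unfinished, but your own Step~2 closes it.
\begin{itemize}
\item A locally injective walk never reverses at a vertex, so $e_1\neq e$, $e_i\neq e_{i+1}$ and $e_m\neq e$. Hence every link of your chain (at most two partial edges through one vertex) joins two \emph{distinct} points of $S'$, and has length at least $\delta'$ by ordinary dispersion.
\item Step~2 therefore applies verbatim to closed walks ($q=p$). If the walk leaves $p$'s edge through $u$ and returns through $v$, use the extreme points with margins $\mu,\nu$ and Step~1's bounds $d(p,u)\ge\mu(\delta+1)$, $d(p,v)\ge\nu(\delta+1)$. If it returns through $u$, use the same extreme point at both ends, giving $2\mu+m\ge(m+1)\delta'$.
\item So ``links from a point back to itself'' never occur, and auto-dispersion of $S'$ is never used. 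For $\delta'<1$ it coincides with dispersion anyway, which also makes your compatibility worry about the normalization moot.
\item The normalization itself is essential, though. For the path $a$--$b$--$c$ with $\delta=4$ and $S'=\{a,c\}$, the output $\{a,c\}$ is not $4$-dispersed.
\end{itemize}
Two further repairs. In Step~1 the identity is $(\delta+1)\bigl(1-(k-1)\delta'\bigr)+(k-2)\delta=1$, with a plus sign. Your displayed lower bound is false for $k\ge 3$: it equals $-\frac14$ for $\delta=\frac12$, $k=3$, $\mu=\nu=\frac16$. The last point's distance to $v$ is simply $(\delta+1)\bigl(1-(k-1)\delta'-\mu\bigr)\ge(\delta+1)\nu$. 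In Step~2, $|W|=(a+b)(\delta+1)+m$ must be an inequality $\ge$, since $p$ need not be extreme on its edge and Step~1 gives only an inequality at $v$.
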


As shown by Hartmann et al.~\cite{DBLP:conf/mfcs/HartmannL22}, $\autodispn = k$ if and only if $\autodispn[\frac{\delta}{\delta+1}] = k + |E(G)|$.
Thus the function $g^T$ exactly translates the absolute error of any approximate solution and we obtain the following result:

\begin{lemma}
    \label{lemma:translationReduction}
    The Translation Reduction $(f^T, g^T)$ is an L-reduction for graphs where $\dispn = c \cdot |E(G)|$ for some constant $c > 0$ with $\alpha = \frac{1}{c}$ and $\beta = 1$.
\end{lemma}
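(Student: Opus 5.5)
We verify the two L-reduction conditions directly from the exact correspondence in \Cref{theorem:translation}. Write $\delta' = \frac{\delta}{\delta+1}$ and $m = |E(G)|$. Hartmann et al.~\cite{DBLP:conf/mfcs/HartmannL22} give $\autodispn[\delta'] = \autodispn + m$; below we identify $\autodispn$ with $\dispn$, which is justified for the values $\delta \le 3$ of interest.

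\textbf{Condition on $\alpha$.} By hypothesis $\dispn = c \cdot m$, so $m = \frac{1}{c}\dispn$. Hence
\[
\OPT_2(f^T(G)) = \dispn + m \le \left(1 + \tfrac{1}{c}\right)\dispn .
\]
This yields the constant $\alpha = 1 + \frac{1}{c}$. The stated value $\alpha = \frac{1}{c}$ appears to be a slightly loose constant: it is not what this argument produces, and it would require $\dispn \le \frac{1}{c}\dispn$, i.e.\ $c \le 1$, which does not hold in general. Any fixed constant suffices for the L-reduction. If one reads the statement as $\dispn[\delta'] = c\cdot m$ instead, the calculation becomes $\dispn[\delta'] = c\, m$ and $\dispn = (c-1)m$, which gives $\alpha = \frac{c}{c-1}$. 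I would state whichever constant matches the intended reading; the structure of the proof is unaffected.

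\textbf{Condition on $\beta$.} Let $S'$ be a $\delta'$-dispersed set, and by the footnote assume without loss of generality that it has at least one interior point on every edge.

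1. The map $g^T$ keeps every vertex point.
2. On each edge it removes exactly one interior point. If the edge carries $k$ interior points of $S'$, it carries $k-1$ interior points of $S$; an edge with exactly one point contributes none.
3. Therefore $|g^T(G,S')| = |S'| - m$.
4. By \Cref{theorem:translation}, $g^T(G,S')$ is $\delta$-dispersed.

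Combining step 3 with the optimum relation gives
\[
\OPT_1 - |g^T(G,S')| = (\OPT_2 - m) - (|S'| - m) = \OPT_2 - |S'| ,
\]
so $\beta = 1$.

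\textbf{Main obstacle.} The delicate step is the w.l.o.g.\ assumption that $S'$ meets the interior of every edge. This requires an efficient preprocessing step that turns an arbitrary $S'$ into such a set without decreasing $|S'|$.

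The procedure is as follows. Since $\delta' < 1$, take an edge without interior points and do the following.
1. If a vertex point lies on that edge, slide it into the interior.
2. Otherwise add a new point at the edge's midpoint. This is valid because the endpoints lie at distance at least $\frac{1}{2}$ from it.

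A potential problem is that the sliding in step 1 breaks dispersion. If it does, I would instead invoke the argument of~\cite{DBLP:conf/mfcs/HartmannL22} directly, where this normalization is shown to run in polynomial time.
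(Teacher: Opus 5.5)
Your $\beta=1$ argument is the paper's: $f^T$ is the identity, the optimum shifts by exactly $|E(G)|$, and $g^T$ removes exactly one point per edge, so absolute errors carry over unchanged. Your correction of $\alpha$ is also right. From $\OPT_1=c\,|E(G)|$ and $\OPT_2=\OPT_1+|E(G)|$ you get $\alpha=1+\frac{1}{c}$. However, $\alpha=\frac{1}{c}$ is not a loose constant but a too-small one. It fails for every $c>0$, not only for $c>1$, because it would require $\OPT_1+|E(G)|\le|E(G)|$.

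The step that fails is your normalization, specifically step~2. The midpoint's distance $\half$ to the endpoints is irrelevant. What matters is its distance to points of $S'$ on adjacent edges, and $\delta'$ can be well above $\half$.

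Take $\delta=3$, so $\delta'=\frac{3}{4}$. Let $G$ contain a path $u'uvv'$, and let $S'$ consist of $p(u,u',\frac{1}{10})$, $p(u',u,\frac{1}{10})$, $p(v,v',\frac{1}{10})$ and $p(v',v,\frac{1}{10})$; this set is $\frac{3}{4}$-dispersed. The midpoint of $\{u,v\}$ is at distance $\frac{3}{5}<\frac{3}{4}$ from $p(u,u',\frac{1}{10})$. In fact no point $p(u,v,x)$ can be added at all, since that would need both $x\ge\frac{13}{20}$ and $x\le\frac{7}{20}$.

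Step~1 is fine for a short enough slide (at most $1-\delta'$). But it empties the vertex and can thereby create exactly this situation on the other incident edges. So empty edges cannot in general be repaired by local insertions; points must be shifted, possibly in a cascade.

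The assumption cannot be skipped either. For the same $S'$, $g^T$ keeps one point on each of $\{u',u\}$ and $\{v,v'\}$, each within distance $\frac{3}{5}$ of $u$ resp.\ $v$. These two points are then at distance at most $\frac{11}{5}<3$, so $g^T(G,S')$ is not $3$-dispersed.

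You therefore have to discard your sketch and rely on the normalization of Hartmann et al., exactly as the paper does in its footnote. With that change, your proof coincides with the paper's.
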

\section{Approximation Algorithms}
\label{sec:approximations}

In this section, we give various constant-factor approximations for \disp{} for $\delta < 2$.
\disp[2] is solvable in polynomial time due to \Cref{lemma:polyTimeTwiceUnitFraction}.
For $\delta > 2$, \disp{} becomes \pAPX-hard, as we show in \Cref{sec:hardness}.

\Cref{lemma:polyTimeUnitFraction,lemma:polyTimeTwiceUnitFraction} yield an infinite number of bounded open intervals for $\delta < 2$ where \disp{} is \NP-hard but computable in polynomial time for either endpoint.
This gives rise to the natural approximation of simply using the solution for the right interval endpoint, as a $\delta$-dispersed set is clearly also $\delta'$-dispersed for any $\delta' < \delta$.

We give approximations for the largest two such intervals, $(1,2)$ and $(\frac{2}{3},1)$.
Throughout this section, we also use this slightly weaker formulation of \Cref{theorem:translation}.

\begin{theorem}[Hartmann et al.~\cite{DBLP:conf/mfcs/HartmannL22}]
    \label{theorem:translationOriginal}
    For any graph $G$ and $\delta > 0$, $\autodispn + |E(G)| = \autodispn[\frac{\delta}{\delta+1}]$.
\end{theorem}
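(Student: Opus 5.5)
The plan is to prove the two inequalities $\autodispn[\frac{\delta}{\delta+1}] \geq \autodispn + |E(G)|$ and $\autodispn[\frac{\delta}{\delta+1}] \leq \autodispn + |E(G)|$ separately. In both directions I will use one explicit edge-by-edge transformation together with its inverse (the latter is essentially $g^T$ from \Cref{def:translationReduction}). Write $\delta' = \frac{\delta}{\delta+1}$.

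The key observation concerns a single edge $\{u,v\}$. View it as the segment $[0,1]$. Suppose its points are $0 \le \lambda_1 < \dots < \lambda_k \le 1$. They cut the edge into an initial piece $a = \lambda_1$, inner gaps $g_1,\dots,g_{k-1} \ge \delta$, and a final piece $b = 1-\lambda_k$, with $a + \sum g_i + b = 1$.

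\textbf{Lower bound.} Start from a $\delta$-auto-dispersed set $S$. Scale every piece by $\frac{1}{\delta+1}$. The scaled pieces have total length $\frac{1}{\delta+1}$, which leaves exactly $\delta'$ of unused length. Use it to insert one extra point at distance $\frac{b}{\delta+1}$ from $v$. Its gap to the previous point is then exactly $\delta'$. If the edge has no point of $S$, place a single point anywhere in its interior. Points of $S$ on vertices stay where they are. This gives $|S| + |E(G)|$ points, and inner gaps are fine, since $g_i/(\delta+1) \ge \delta'$.

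For the general distance check, take any two new points and any (locally injective) walk between them. Replace each piece of the walk by its preimage in the old configuration. A partial edge piece shrinks by the factor $\frac{1}{\delta+1}$. Every edge traversed completely still has length $1 = \frac{1}{\delta+1} + \delta'$, and it carries an extra point in its interior, which then lies strictly between the endpoints. So either the walk passes an inserted point, and consecutive new points along it are already at distance $\ge \delta'$, or it uses only partial edge pieces. In the second case the corresponding old walk has length at least $\delta$, by $\delta$-auto-dispersion of $S$, and the new one has length at least $\delta/(\delta+1) = \delta'$. Cycles (walks from a point back to itself) are handled by the same argument, which is why the statement is for \textsc{AutoDispersion}.

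\textbf{Upper bound.} Start from a $\delta'$-auto-dispersed set $S'$. First normalize so that every edge contains a point of $S'$ in its interior. Since $\delta' < 1$, a vertex point can be pushed slightly into an edge that has no interior point without violating any distance; this needs a short local exchange argument. Then on every edge delete one interior point, for instance the one adjacent to the larger end piece, so that the merged gap is at least $\delta'$. Multiply all remaining pieces by $\delta+1$, which is exactly the map $g^T$. The pieces again sum to $1$, because the deleted point accounts for $\delta'(\delta+1) = \delta$ of length.

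The distance verification is the reverse of the one above. A walk between two surviving points in the new configuration maps to a walk in $S'$ that is at least as long, after rescaling, plus $\delta'$ for each fully traversed edge. Hence its length is at least $(\delta+1)\delta' = \delta$.

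I expect the main obstacle to be the uniform bookkeeping across vertices. Walks between the two configurations do not correspond piece by piece when they pass through vertex points or fully traverse edges. The normalization step in the upper bound, ensuring an interior point on every edge without losing points, is the other delicate part. I would handle both by fixing the convention that the extra or removed point on each edge sits adjacent to the endpoint $v$ of a fixed orientation. Then every fully traversed edge contributes exactly $\frac{1}{\delta+1} + \delta'$ in one world and exactly $1$ in the other.
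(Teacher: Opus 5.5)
Your strategy is the one behind the map $g^T$ that the paper takes from Hartmann et al.: shrink each edge and insert one point per edge, and conversely normalise, delete one interior point per edge and re-stretch. The paper itself only cites this theorem and gives no proof. Your distance bookkeeping is correct on edges that carry points. However, two steps fail as written.

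\emph{Lower bound: edges containing no point of $S$.} Such an edge cannot receive its new point ``anywhere''. Your rule pulls the new points on neighbouring edges towards the shared vertex by the factor $\frac{1}{\delta+1}$, so the admissible position can shrink to a single point. Take $\delta=\frac32$ (so $\delta'=\frac35$), the path $x,u,v,y$ and $S=\{p(u,x,\frac14),p(v,y,\frac14)\}$. Your rule creates new points at distance $\frac1{10}$ from $u$ and from $v$, so the point on $\{u,v\}$ must be exactly its midpoint. Your verification does not cover this case. Such a point has no preimage, and the claim that consecutive new points ``are already at distance $\ge\delta'$'' is asserted, not checked. In general the point-free edges form a subgraph $H$, and points on adjacent edges of $H$ constrain each other at shared vertices. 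You must show that one placement works simultaneously on all of $H$. Let $r_w$ be the old distance from $w$ to $S$ along edges outside $H$. Along a path of $k$ edges of $H$ from $w_0$ to $w_k$, a placement exists only if $r_{w_0}+k+r_{w_k}\ge\delta$. This inequality follows from $\delta$-auto-dispersion of $S$, because both ends may see the same point of $S$, making the walk closed. This is where auto-dispersion genuinely enters, and that argument is missing.

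\emph{Upper bound: normalisation.} Pushing a vertex point into an empty edge only handles empty edges that have an endpoint in $S'$. Take $\delta=9$, $\delta'=\frac9{10}$, the path $w,u,v,x$ and the optimal set $S'=\{w,p(w,u,\frac9{10}),p(v,x,\frac1{10}),x\}$. Here $\{u,v\}$ has no interior point and $u,v\notin S'$. Applying delete-and-stretch anyway leaves $\{w,x\}$ at distance $3<9$. This is because your length estimate needs a deleted point on every fully traversed edge. To normalise, you must slide a point from a neighbouring edge across the shared vertex: here $p(w,u,\frac9{10})\mapsto p(u,v,\frac1{10})$, then push $w$ into $\{w,u\}$. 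In general such a slide can empty the neighbouring edge, so you need a termination or potential argument. It must show that some maximum $\delta'$-auto-dispersed set has an interior point on every edge. This is exactly the lemma the paper attributes to Hartmann et al., and it is not a short local exchange. Without it, the upper bound is not established.
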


This theorem also allows us to translate our results for $\delta > \frac{2}{3}$ to all smaller intervals.

\subsection{Approximations for $1 < \delta < 2$}

We start with a bound on the optimal solution for $\delta > 1$.
Then we use that result to give an approximation scheme for $\delta \leq \frac{3}{2}$, with an approximation factor approaching $1$ as $\delta \rightarrow 1$.
Finally, we show two approximations for $\delta > \frac{3}{2}$, where the second one again improves the approximation factor towards $1$ as $\delta \rightarrow 2$.

\begin{lemma}
    \label{lemma:maximumSolutionSizeDeltaLargerOne}
    For any graph $G$ and $\delta > 1$, it holds that $\dispn \leq |V(G)|-1$.
\end{lemma}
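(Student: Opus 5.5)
The plan is to build an injection from a $\delta$-dispersed set $S$ into $V(G)$ and then show that it cannot be onto. As noted in the preliminaries, we may assume that $G$ is connected and has at least one edge. This is needed: for a single isolated vertex the bound fails, and the statement is only meant for the graphs we actually consider. For general graphs one argues per component, using only components that contain an edge.

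\emph{Step 1 (injection).} Each $p \in S$ lies on some edge $\{u,v\}$, so $p = p(u,v,\lambda)$ with $\lambda \le \half$ after possibly swapping $u$ and $v$. Map $p$ to its nearest vertex $\varphi(p) = u$, breaking ties arbitrarily, and write $\lambda_p = dist(p,\varphi(p)) \le \half$. If $\varphi(p) = \varphi(q) = u$ for distinct $p,q$, then $dist(p,q) \le \lambda_p + \lambda_q \le 1 < \delta$, which is a contradiction. So $\varphi$ is injective and $|S| \le |V(G)|$.

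\emph{Step 2 (strictness).} Suppose for contradiction that $\varphi$ is a bijection. Choose $p \in S$ that maximizes $\lambda_p$, and let $v = \varphi(p)$. Let $\{v,w\}$ be the edge on which $p$ lies. If $\lambda_p = 0$, then $p = v$, and we take $w$ to be any neighbour of $v$, which exists because $G$ is connected and has an edge. Since $\varphi$ is onto, some $q \in S$ has $\varphi(q) = w$. Walking from $p$ along the edge to $w$ and then to $q$ gives
\[ dist(p,q) \le (1-\lambda_p) + \lambda_q \le 1 - \lambda_p + \lambda_p = 1 < \delta, \]
where the middle inequality uses the maximality of $\lambda_p$. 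This contradicts $\delta$-dispersion. Hence some vertex is not in the image of $\varphi$, and $|S| \le |V(G)|-1$.

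\emph{Main obstacle.} The delicate point is Step 2. A naive argument would look at a single edge and get no contradiction, because the two nearby points could sit far out on different edges. Choosing the point with the globally largest offset $\lambda_p$ resolves this: it forces the point assigned to the other endpoint $w$ to be at most as far from $w$ as $p$ is from $v$. We also have to make sure the degenerate case $\lambda_p = 0$ is covered. It is, by choosing an arbitrary neighbour, which is where the assumption of a connected graph with at least one edge is used.
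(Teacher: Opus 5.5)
Your proof is correct, and it takes a different route from the paper.

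The paper argues structurally. For $\delta>1$, each edge carries at most one point, and point-carrying vertices are pairwise non-adjacent and not incident to point-carrying edges. Moreover, the edges carrying interior points contain no cycle: along a cycle the offsets would have to grow by at least $\delta-1$ at every step and could not close up. Hence the support of $S$ is a vertex-disjoint union of trees, each contributing $|V(T)|-1$ points, plus isolated vertices contributing one point each, and counting gives the bound.

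You replace this with two steps. The nearest-vertex charging map $\varphi$ already gives $|S|\le|V(G)|$. The extremal choice of the point with the largest offset then shows that $\varphi$ cannot be onto. Your maximum-offset point plays the role of the paper's monotonicity of offsets: instead of forbidding a cycle of point-carrying edges, it cuts off the chain $p\mapsto q$ of points assigned to far endpoints.

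What each approach buys:
\begin{itemize}
\item Your version is shorter, avoids the cycle argument entirely, and gives an explicit charging of points to vertices.
\item The paper's version exposes the forest structure of an optimal solution, which is what the spanning-tree algorithm of \Cref{lemma:approxLargerOneApproachingOne} imitates.
\item The paper's version also gives the slightly stronger bound $|V(G)|-m$ when the support contains $m\ge 1$ trees.
\end{itemize}

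Both proofs need $G$ to have an edge (connected, in the paper's setting). The paper uses this only implicitly, when it bounds the number of isolated point-carrying vertices by $|V(G)|-1$. You are right to state the assumption explicitly, since the bound fails for edgeless graphs.
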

\begin{proof}
    Let $S$ be some optimal $(1+\varepsilon)$-dispersed set.
    Since $\delta > 1$, no edge with a point from $S$ can be incident to a vertex with a point from $S$, and no two vertices with a point from $S$ can be adjacent.
    Further, the graph induced by the edges that contain a point from $S$ in their interior cannot contain a cycle.
    For a contradiction, assume such a cycle exists, and let $p_1, \dots, p_k$ be the points in the order they appear on the cycle, and let $v_1, \dots, v_k$ be the vertices of the cycle such that $p_i = p(v_i, v_{i+1}, \lambda_i)$ for $i \in [k-1]$ and $p_k = p(v_k, v_1, \lambda_k)$.
    Then $\lambda_{i+1} \geq \lambda_i + \varepsilon$ for $i \in [k-1]$ as otherwise $p_i$ and $p_{i+1}$ have distance less than $\delta$ to each other.
    Then $\lambda_k \geq \lambda_1 + k\varepsilon$ and thus $p_1$ and $p_k$ have distance less than $\delta$ to each other.
    Contradiction to $S$ being $(1+\varepsilon)$-dispersed.
    Let $G'$ be the graph containing exactly the vertices and edges that contain a point from $S$.
    Then $G'$ consists of trees and single vertices.
    Since the single vertices cannot be adjacent in $G$, there may be at most $|V(G)|-1$ of them.
    Each tree $T$ in $G'$ contributes $|V(T)|-1$ points to $S$, and each single vertex contributes $1$.
    Thus in total $S$ can have size at most $|V(G)|-1$.
\end{proof}

With this result, we give our approximation for $\delta \rightarrow 1$ by constructing a heuristic solution that improves in steps as $\delta$ approaches $1$.
As $\delta$ approaches $1$, the approximation factor also approaches $1$.

\begin{figure}[t]
	\begin{subfigure}{0.48\textwidth}
		\centering
		\resizebox{\textwidth}{!}{
			\begin{tikzpicture}[scale=0.5,
				node/.style = {shape=circle, draw, inner sep=0pt, minimum size=0.25cm},
				textnode/.style = {shape=circle, draw, inner sep=0pt, minimum size=0.4cm},
				smallnode/.style = {shape=circle, draw, inner sep=0pt, minimum size=0.07cm},
				box/.style = {rectangle, fill=gray!20, rounded corners, fill opacity=1, inner sep=1pt},
				cross/.style={cross out, draw=black, minimum size=0.15cm, inner sep=0pt, outer sep=0pt}]
				% example for algo
				\node[smallnode] (root) at (0, 4) {};
                \node[smallnode] (left) at (-2, 3) {};
                \node[smallnode] (right) at (2, 3) {};
                \node[smallnode] (leftleft) at (-3, 2) {};
                \node[smallnode] (leftright) at (-1, 2) {};
                \node[smallnode] (rightleft) at (1, 2) {};
                \node[smallnode] (rightright) at (3, 2) {};
                \node[smallnode] (leftleftleft) at (-3.5, 1) {};
                \node[smallnode] (leftleftright) at (-2.5, 1) {};
                \node[smallnode] (leftrightleft) at (-1.5, 1) {};
                \node[smallnode] (leftrightright) at (-0.5, 1) {};
                \node[smallnode] (rightleftleft) at (0.5, 1) {};
                \node[smallnode] (rightleftright) at (1.5, 1) {};
                \node[smallnode] (rightrightleft) at (2.5, 1) {};
                \node[smallnode] (rightrightright) at (3.5, 1) {};
			     
                % subtrees
                \draw[thick] (root) -- (left) (root) -- (right);
                \draw[thick] (leftleft) -- (leftleftleft) (leftleft) -- (leftleftright);
                \draw[thick] (leftright) -- (leftrightleft) (leftright) -- (leftrightright);
                \draw[thick] (rightleft) -- (rightleftleft) (rightleft) -- (rightleftright);
                \draw[thick] (rightright) -- (rightrightleft) (rightright) -- (rightrightright);
                % remaining tree
                \draw (left) -- (leftleft) (left) -- (leftright);
                \draw (right) -- (rightleft) (right) -- (rightright);
                % some extra edges
                \draw[dashed] (leftleftleft) -- (leftright) -- (rightleft) -- (root) -- (leftleftright) -- (leftrightleft) -- (rightleft) -- (rightrightleft) -- (rightrightright);
				% points
				\node[cross, rotate=30] at (-1.5, 3.25) {};
				\node[cross, rotate=330] at (1.5, 3.25) {};
				\node[cross, rotate=60] at (-3.375, 1.25) {};
				\node[cross, rotate=300] at (-2.625, 1.25) {};
                \node[cross, rotate=60] at (-1.375, 1.25) {};
				\node[cross, rotate=300] at (-0.625, 1.25) {};
                \node[cross, rotate=60] at (0.625, 1.25) {};
				\node[cross, rotate=300] at (1.375, 1.25) {};
                \node[cross, rotate=60] at (2.625, 1.25) {};
				\node[cross, rotate=300] at (3.375, 1.25) {};
			\end{tikzpicture}
		}
		\caption{Example for the algorithm of \Cref{lemma:approxLargerOneApproachingOne} for $k = 1$. Thick edges are the subtrees, dashed edges are additional edges of the graph. Points placed are marked by crosses.}
		\label{fig:approxSmallerThreeHalvesExample}
	\end{subfigure}
	\hfill
	\begin{subfigure}{0.48\textwidth}
		\centering
		\resizebox{!}{2.8cm}{
			\begin{tikzpicture}[scale=0.5,
				node/.style = {shape=circle, draw, inner sep=0pt, minimum size=0.25cm},
				textnode/.style = {shape=circle, draw, inner sep=0pt, minimum size=0.4cm},
				smallnode/.style = {shape=circle, draw, inner sep=0pt, minimum size=0.07cm},
				box/.style = {rectangle, fill=gray!20, rounded corners, fill opacity=1, inner sep=1pt},
				cross/.style={cross out, draw=black, minimum size=0.15cm, inner sep=0pt, outer sep=0pt}]
				% G_3 with solution for delta = 7/6
				\node[smallnode] (center) at (0, 0) {};
                \node[smallnode] (n1) at (1, 1) {};
                \node[smallnode] (n2) at (-1, 1) {};
                \node[smallnode] (n3) at (0, -1.4) {};
                \node[smallnode] (n4) at (1.4, -0.5) {};
                \node[smallnode] (n5) at (-1.4, -0.5) {};

                \draw[dotted] (n1) -- (n2) -- (n5) -- (n3) -- (n4);
                \draw (n4) -- (n1);
                \draw[thick] (center) -- (n1) (center) -- (n2) (center) -- (n3) (center) -- (n5);
                \draw[thick, dashed] (center) -- (n4);
			\end{tikzpicture}
		}
		\caption{Illustration of the worst case example from \Cref{lemma:approxLargerOneApproachingOne}. The optimal spanning tree uses the thick edges, while the worst choice uses the dotted ones. The thick dashed edge is used by both.}
		\label{fig:approxSmallerThreeHalvesLowerBound}
	\end{subfigure}
	\caption{Examples for the algorithm and lower bound from \Cref{lemma:approxLargerOneApproachingOne}.}
\end{figure}

\begin{lemma}
    \label{lemma:approxLargerOneApproachingOne}
    For $\delta \leq \frac{2k+1}{2k}$ for $k \in \mathbb N^+$, \disp{} can be approximated with a factor of $\frac{k+1}{k}$, for any graph $G$.
    Further, this factor is tight with this approach.
\end{lemma}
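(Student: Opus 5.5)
The plan is to compare a constructive solution with the upper bound $\dispn \leq |V(G)|-1$ from \Cref{lemma:maximumSolutionSizeDeltaLargerone}. Since we can work per connected component, and trees are solvable exactly, we may assume that $G$ is connected. A $\frac{2k+1}{2k}$-dispersed set is also $\delta$-dispersed for every $\delta \leq \frac{2k+1}{2k}$, so it suffices to construct such a set of size at least $\frac{k}{k+1}(|V(G)|-1)$.

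\textbf{Construction.} Fix a spanning tree $T$ of $G$ rooted at a vertex $r$. For a shift $s \in \{0,\dots,k\}$, cut every tree edge whose child endpoint has depth $\equiv s \pmod{k+1}$. The tree edges then fall into $k+1$ classes, so for the best shift at most $\frac{|V(G)|-1}{k+1}$ edges are cut. The remaining forest consists of subtrees of height at most $k$, as in \Cref{fig:approxSmallerThreeHalvesExample}. In each such subtree with root $w$, consider an edge whose child lies at relative depth $j \in [k]$. On this edge we place exactly one point, at distance
\[
x_j = \tfrac12 + \tfrac{2j-1}{4k}
\]
from its parent endpoint. Uncut edges contain no further points, and cut edges and non-tree edges contain none. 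This yields $|V(G)|-1-\#\text{cut} \geq \frac{k}{k+1}(|V(G)|-1)$ points.

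\textbf{Dispersion check.} The key observation is that every placed point has distance at least $\frac{1}{4k}$ to both endpoints of its edge, because $x_j \leq 1 - \frac{1}{4k}$. Any shortest path between two points passes through vertices, so we distinguish how it leaves the two edges.
\begin{itemize}
\item \emph{Sibling edges meeting at the common parent.} Both points are at distance at least $x_1 = \frac12+\frac1{4k}$ from the parent, so their distance is at least $1+\frac1{2k}$.
\item \emph{A parent edge and a child edge meeting at a vertex of relative depth $j$.} The distance is $(1-x_j)+x_{j+1} = 1+\frac{1}{2k}$.
\item \emph{Any other pair.} The path traverses at least one full edge (a tree edge, a cut edge, or a non-tree edge), and each endpoint contributes at least $\frac1{4k}$. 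This gives length at least $1+\frac1{2k}$.
\end{itemize}
The main obstacle is checking this last case exhaustively, including paths that leave a point through its far endpoint. It reduces to the $\frac{1}{4k}$ margin at both ends, together with the fact that two points are never on the same edge.

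\textbf{Tightness.} We need an instance with $\dispn = |V(G)|-1$ on which the approach with a badly chosen spanning tree achieves only a $\frac{k}{k+1}$ fraction. Following \Cref{fig:approxSmallerThreeHalvesLowerBound}, take a star-like structure that admits an optimal solution with one point per edge of a good spanning tree. Add an alternative spanning tree, a long path, in which the depth classes are balanced. Then every shift cuts a $\frac{1}{k+1}$ fraction of the edges, and the ratio $\frac{k+1}{k}$ is attained, at least asymptotically.
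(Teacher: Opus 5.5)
Your proof is correct. The placement of points is the paper's, but you choose the decomposition of the spanning tree differently.

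The paper roots the tree so that its height is divisible by $k+1$ and measures levels from the leaves. It then repeatedly peels off the subtrees rooted at level-$(k+1)$ vertices. Each peeled piece $T$ has at least $k+1$ vertices and receives $|V(T)|-1$ points, so the factor $\frac{k+1}{k}$ holds piece by piece. You instead delete one residue class of depths modulo $k+1$ and keep the best of the $k+1$ shifts. Your pieces may be tiny (even single vertices), and the bound $\frac{k}{k+1}(|V(G)|-1)$ comes from averaging over the shifts, not from a per-piece guarantee.

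The placement and the dispersion check match the paper. Your offset from the child endpoint, $1-x_j=\frac{2(k+1-j)-1}{4k}$, is the paper's $\frac{2i-1}{4k}$ indexed by $i=k+1-j$ instead of by level from the leaves. Both arguments rest on two facts: every point is at least $\frac{1}{4k}$ away from all vertices, and two points are at least $1+\frac{1}{2k}$ apart through a shared vertex.

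Your shifting argument gives a one-line count and identical offsets for siblings. It also avoids the bookkeeping of the iterative peeling: the choice of root, recomputing levels, and what is left near the root once no level-$(k+1)$ vertex remains. The paper's peeling uses a single decomposition and gives a local guarantee for each piece.

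Your tightness example is the paper's wheel graph (star versus Hamiltonian path as spanning tree), described less explicitly. Two details complete it. Your own construction, applied to the star rooted at the hub with any shift $s\neq 1$, certifies that the optimum there is $|V(G)|-1$. On the path, every shift cuts about $\frac{|V(G)|-1}{k+1}$ edges.
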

\begin{proof}
    First, we compute an arbitrary spanning tree $\mathcal T$ of $G$, and choose the root such that the height of $\mathcal T$ is divisible by $k+1$.
    If no such root exists, $\mathcal T$ has a height of at most $k$, and our procedure will thus place a total $|V(G)|-1$ points, which is optimal by the arguments of \Cref{lemma:maximumSolutionSizeDeltaLargerOne}.
    We say that a leaf of $\mathcal T$ has level $1$, and the level of an inner vertex is the maximum level of its children plus $1$.

    Now we repeat the following procedure until all vertices from $\mathcal T$ are deleted:
    Consider all subtrees $T$ of $\mathcal T$ that are rooted at level $k+1$ vertices.
    For $i \in [k]$, we put the point at distance $\frac{2i-1}{4k}$ from the level $i$ vertex on the edge to its parent in $T$ into $S$.
    Then we delete all subtrees $T$ from $\mathcal T$ and recalculate the levels of the vertices in $\mathcal T$.
    
    Observe that first, $1 - \frac{2k-1}{4k} \geq \frac{\delta}{2}$.
    Thus for any vertex $v \in V(T)$, the points on the edges to the children of $v$ have distance at least $\delta$ to each other.
    Second, $\frac{2x-1}{4k} + 1 - \frac{2(x-1)-1}{4k} = 1 + \frac{2x-1}{4k} - \frac{2x-3}{4k} = 1 + \frac{1}{2k} \geq \delta$.
    Thus for any vertex $v \in V(T)$, the points on the edges to the children of $v$ have distance at least $\delta$ to the point on the edge to the parent of $v$.
    Lastly, $\frac{1}{4k} \geq \frac{\delta-1}{2}$ and thus all points we place have distance at least $\frac{\delta-1}{2}$ to any vertex of $T$.
    Therefore they also have distance at least $\delta$ in $G$, since an edge in $G$ that is not present in $T$ (or connects two trees $T$ and $T'$) creates a path of length at least $1 + 2 \cdot \frac{\delta-1}{2} = \delta$ between any two points in $S$.
    An illustration of this procedure can be seen in \Cref{fig:approxSmallerThreeHalvesExample}.

    In total, for each tree $T$, we place at least $|V(T)|-1$ points in $S$, while an optimal solution could place at most $|V(T)|$.\footnote{Actually, by the arguments of \Cref{lemma:maximumSolutionSizeDeltaLargerOne}, an optimal solution could place at most $|V(G)|-1$ vertices, so we are slightly overestimating the size of the optimal solution. 
    Additionally, there is no guarantee where an optimal solution would place these points, however we just care about the total number.}
    Since $|V(T)| \geq k+1$, we get an approximation factor of at most $\frac{k+1}{k}$ for each tree $T$, and thus also a factor of at most $\frac{k+1}{k}$ in total.

    If the initial tree $\mathcal T$ has height at most $k$, we can instead use the above procedure to place a point on each edge of $\mathcal T$, placing $|V(\mathcal T)| - 1 = |V(G)| - 1$ points, which is optimal by the arguments of \Cref{lemma:maximumSolutionSizeDeltaLargerOne}.

    To see that this factor is tight, consider a wheel graph, which is a cycle with an additional vertex connected to every vertex of the cycle.
    For an example, see \Cref{fig:approxSmallerThreeHalvesLowerBound}.
    One spanning tree of a wheel graph is its Hamiltonian path, in which case our procedure would place $k$ points for every $k+1$ vertices.
    Another spanning tree is a star, which has height $2 \leq k+1$ and thus our procedure would find the optimal solution with $|V(G)| - 1$ points. 
    Thus this example exactly reaches the claimed factor for $|V(G)| \rightarrow \infty$.
\end{proof}

\Cref{lemma:approxLargerOneApproachingOne} covers the values for $\delta \in (1, \frac{3}{2}]$.
Next we show that a $\delta$-dispersed set is at most twice as large as a $2$-dispersed set for $\delta > \frac{3}{2}$, and thus a $2$-dispersed set, which can be computed in polynomial time, is a $2$-approximation.

\begin{figure}[t]
	\begin{subfigure}{0.48\textwidth}
		\centering
		\resizebox{\textwidth}{!}{
			\begin{tikzpicture}[scale=0.5,
				node/.style = {shape=circle, draw, inner sep=0pt, minimum size=0.15cm},
				textnode/.style = {shape=circle, draw, inner sep=0pt, minimum size=0.4cm},
				smallnode/.style = {shape=circle, draw, inner sep=0pt, minimum size=0.07cm},
				box/.style = {rectangle, fill=gray!20, rounded corners, fill opacity=1, inner sep=1pt},
				cross/.style={cross out, draw=black, minimum size=0.15cm, inner sep=0pt, outer sep=0pt}]
				% example for algo
				\node[node] (left) at (-4, 0) {};
                \node[inner sep=0pt, minimum size=0cm] (leftleft) at (-3, 0) {};
                \node[inner sep=0pt, minimum size=0cm] (leftmiddle) at (-1, 0) {};
                \node[node] (middle) at (0, 0) {};
                \node[inner sep=0pt, minimum size=0cm] (rightmiddle) at (1, 0) {};
                \node[inner sep=0pt, minimum size=0cm] (rightright) at (3, 0) {};
                \node[node] (right) at (4, 0) {};
			     
                % edges
                \draw[dashed] (left) -- (leftleft) (leftmiddle) -- (middle) -- (rightmiddle) (rightright) -- (right);
                \draw[ultra thick] (leftleft) -- (leftmiddle) (rightmiddle) -- (rightright);
				% points
				\node[cross,label=below:$p$] at (-2.5, 0) {};
				\node[cross,label=above:$q$] at (3.2, 0) {};
                % areas
                \draw [decorate,decoration={brace,mirror,amplitude=5pt,raise=2ex}] (-4, 0) -- (leftleft) node[midway,yshift=-2em]{$S_V$};
                \draw [decorate,decoration={brace,mirror,amplitude=5pt,raise=2ex}] (leftmiddle) -- (rightmiddle) node[midway,yshift=-2em]{$S_V$};
                \draw [decorate,decoration={brace,mirror,amplitude=5pt,raise=2ex}] (rightright) -- (4, 0) node[midway,yshift=-2em]{$S_V$};
                \draw [decorate,decoration={brace,amplitude=5pt,raise=2ex}] (leftleft) -- (leftmiddle) node[midway,yshift=2em]{$S_E$};
                \draw [decorate,decoration={brace,amplitude=5pt,raise=2ex}] (rightmiddle) -- (rightright) node[midway,yshift=2em]{$S_E$};
			\end{tikzpicture}
		}
		\caption{Example for the algorithm of \Cref{lemma:approxLargerThreeHalves}. Points on thick edge parts (here $p$) are moved to midpoints in $S_E$, while the others are moved to vertices in $S_V$ (here $q$).}
		\label{fig:approxLargerThreeHalvesExample}
	\end{subfigure}
	\hfill
	\begin{subfigure}{0.48\textwidth}
		\centering
		\resizebox{\textwidth}{!}{
			\begin{tikzpicture}[scale=0.5,
				node/.style = {shape=circle, draw, inner sep=0pt, minimum size=0.25cm},
				textnode/.style = {shape=circle, draw, inner sep=0pt, minimum size=0.4cm},
				smallnode/.style = {shape=circle, draw, inner sep=0pt, minimum size=0.07cm},
				box/.style = {rectangle, fill=gray!20, rounded corners, fill opacity=1, inner sep=1pt},
				cross/.style={cross out, draw=black, minimum size=0.15cm, inner sep=0pt, outer sep=0pt}]
				% triangles
				\node[smallnode] (center) at (0, 0) {};
                \node[smallnode] (t11) at (2, -1) {};
                \node[smallnode] (t12) at (2.5, -2) {};
                \node[smallnode] (t13) at (1.5, -2) {};
                \node[smallnode] (t21) at (0, -1) {};
                \node[smallnode] (t22) at (0.5, -2) {};
                \node[smallnode] (t23) at (-0.5, -2) {};
                \node[smallnode] (t31) at (-2, -1) {};
                \node[smallnode] (t32) at (-2.5, -2) {};
                \node[smallnode] (t33) at (-1.5, -2) {};

                % points 
                \node[cross] at (2, -2) {};
                \node[cross, rotate=330] at (1.667, -0.833) {};
                \node[cross] at (0, -2) {};
                \node[cross] at (0, -0.833) {};
                \node[cross] at (-2, -2) {};
                \node[cross, rotate=30] at (-1.667, -0.833) {};

                % edges
                \draw (center) -- (t11) (center) -- (t21) (center) -- (t31);
                \draw (t11) -- (t12) -- (t13) -- (t11);
                \draw (t21) -- (t22) -- (t23) -- (t21);
                \draw (t31) -- (t32) -- (t33) -- (t31);
			\end{tikzpicture}
		}
		\caption{Illustration of the worst case example from \Cref{lemma:approxLargerThreeHalves}. A $\delta$-dispersed set can place points as shown, while a $2$-dispersed set only places one point per triangle and one on the central vertex.}
		\label{fig:approxLargerThreeHalvesLowerBound}
	\end{subfigure}
	\caption{Examples for the algorithm and lower bound from \Cref{lemma:approxLargerThreeHalves}.}
\end{figure}

\begin{lemma}
    \label{lemma:approxLargerThreeHalves}
    For any graph $G$ and $\delta > \frac{3}{2}$, it holds that $\half \cdot \dispn \geq \dispn[2]$.
    Additionally for $\delta \leq \frac{5}{3}$, this factor is tight.
\end{lemma}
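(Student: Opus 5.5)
Let $S$ be an optimal $\delta$-dispersed set. The plan is to map $S$ to a $2$-dispersed set $S'$ with $|S'| \geq \half |S|$ via a rounding that sends each point either to a vertex or to an edge midpoint, as suggested by \Cref{fig:approxLargerThreeHalvesExample}. (The inequality intended is of course $\dispn[2] \geq \half\cdot\dispn$.) Partition each edge $\{u,v\}$ into the middle part $\{p(u,v,\lambda) : \tfrac14 \le \lambda \le \tfrac34\}$ and the two end parts within distance less than $\tfrac14$ of a vertex. Points of $S$ in a middle part are moved to the midpoint of their edge (forming $S_E$), and points in an end part are moved to the nearer vertex (forming $S_V$). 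Since $\delta > \tfrac32 > 1$, each edge contains at most one point of $S$ in its interior, and two points cannot be moved to the same vertex: both would lie within $\tfrac14$ of that vertex, hence have distance $<\tfrac12$. So the map is injective, and $|S_V \cup S_E| = |S|$.

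The moved set is not necessarily $2$-dispersed, so next I would build a conflict graph $H$ on $S_V \cup S_E$, joining two moved points if their distance is less than $2$. Each move changes the position of a point by at most $\tfrac14$, so conflicting pairs had original distance less than $2 + \tfrac12$. The possible conflicts are then: two vertices at distance $1$; a vertex and a midpoint of an incident edge (distance $\tfrac12$) or of an edge at distance $1$ (distance $\tfrac32$); and two midpoints of adjacent edges (distance $1$). Each conflict has to be checked against $\delta > \tfrac32$ together with the displacement bounds; for instance, two adjacent vertex images come from points within $\tfrac14$ of adjacent vertices, so they were at distance $< \tfrac32 < \delta$. Hence several of these conflict types cannot occur, and the remaining ones are heavily restricted. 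The goal is to show that $H$ is bipartite, or more strongly that every connected component of $H$ admits an independent set of at least half its size. Such an independent set is $2$-dispersed in $G$ and gives the bound.

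The main obstacle is this structural analysis of $H$. First I would try to show that $H$ has maximum degree at most $1$, i.e.\ is a matching, using $\delta > \tfrac32$. The argument would be that two neighbours of one moved point would force two original points of $S$ within distance roughly $\tfrac32$ of each other, which is impossible. If degree one fails in some configuration, for example a vertex $u \in S_V$ with several midpoints of edges at distance $1$, I would fall back to $2$-colouring $H$: conflicts only arise between $S_V$ and $S_E$ (after excluding the $S_V$–$S_V$ and $S_E$–$S_E$ types as above), so $H$ is bipartite, and the larger side has size at least $\half|S|$. That larger side is a $2$-dispersed set, since all remaining pairs are at distance at least $2$. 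It may be necessary to tune the threshold $\tfrac14$ so that the $S_E$–$S_E$ conflicts are excluded.

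For tightness when $\delta \le \tfrac53$, I would use the example of \Cref{fig:approxLargerThreeHalvesLowerBound}: a central vertex joined to $m$ triangles. A $2$-dispersed set there has at most one point per triangle plus one more, so at most $m+1$ points; this is checked by noting that each triangle together with its pendant edge has diameter less than $2$. A $\delta$-dispersed set places $2$ points per gadget, one on the far triangle edge's midpoint and one on the connecting edge at distance about $\tfrac56$ from the center. This uses $\delta \le \tfrac53$, so that points on different connecting edges are at distance $\tfrac53 \ge \delta$, and the triangle midpoint is at distance $\tfrac16 + 1 + \tfrac12 = \tfrac53$ from the point on its connecting edge. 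This gives $2m$ points and a ratio tending to $2$.
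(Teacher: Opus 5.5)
For the inequality, once you drop the conflict graph and the matching attempt, your fallback is exactly the paper's proof. The paper uses the same rounding with threshold $\tfrac14$, shows that the vertex images $S_V$ and the midpoint images $S_E$ are \emph{each} $2$-dispersed, and keeps the larger of the two. That is your ``larger side of the bipartition''.

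The matching attempt does fail in precisely the configuration you mention, so go straight to the bipartition. For example, take $\delta=1.6$, a vertex $u$ carrying a point, and neighbours $w_1,\dots,w_m$ of $u$ with pendant edges $\{w_i,x_i\}$ carrying points at distance $0.6$ from $w_i$. Then the image of $u$ conflicts with all $m$ midpoint images.

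The step you left open needs no tuning of the threshold. If two midpoint images lie on edges sharing a vertex $v$, both original points are within $\tfrac12+\tfrac14=\tfrac34$ of $v$. Their distance is therefore at most $\tfrac32<\delta$, so $S_E$--$S_E$ conflicts cannot occur. You are also right that the inequality in the statement is reversed; what is proved is $\dispn[2]\ge\half\cdot\dispn$.

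The one actual error is in the tightness part: a triangle together with its pendant edge does \emph{not} have diameter less than $2$. The central vertex $c$ and the midpoint of the far triangle edge are at distance $1+1+\tfrac12=\tfrac52$. Even after removing $c$, points of the pendant edge close to $c$ are at distance almost $\tfrac52$ from that midpoint. Indeed, $c$ together with all far midpoints is a $2$-dispersed set with $m+1$ points, two of which lie in the same gadget.

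The bound $m+1$ is correct, but it needs a different decomposition. Each closed triangle has diameter $\tfrac32<2$, so it contains at most one point of a $2$-dispersed set. Every point not on a triangle lies on $c$ or in the interior of a connecting edge, i.e.\ at distance $<1$ from $c$. Any two such points are at distance $<2$, so at most one of them can be chosen. With this fix, your $2m$-point $\delta$-dispersed set gives the ratio $\frac{2m}{m+1}\to2$, as in the paper.
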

\begin{proof}
    We split an arbitrary $\delta$-dispersed set $S$ into two sets $S_V$ and $S_E$ which are both $2$-dispersed:
    \begin{itemize}
        \item Let $S_V = \{v ~\mid~ \oball{v}{\frac{1}{4}} \cap S \neq \emptyset\}$ and
        \item let $S_E = \{p = p(u,v,\half) ~\mid~ \cball{p}{\frac{1}{4}} \cap S \neq \emptyset\}$.
    \end{itemize}
    An example of this can be seen in \Cref{fig:approxLargerThreeHalvesExample}.
    Assume $S_V$ is not $2$-dispersed.
    Then there are two adjacent vertices $u,v \in S_V$, as $S_V$ only contains points on vertices by definition.
    Let $p_u$ and $p_v$ be the points in $\oball{u}{\frac{1}{4}}$ (resp. $\oball{v}{\frac{1}{4}}$).
    Since $\{u, v\} \in E(G)$, $dist(p_u,p_v) < \frac{3}{2} < \delta$, which is a contradiction to $S$ being $\delta$-dispersed.
    Now assume $S_E$ is not $2$-dispersed.
    Then there are two incident edges $\{u,v\}, \{v,w\}$ with $p = p(u,v,\half), p' = p(u',v',\half) \in S_E$.
    Let $q,q'$ be the points in $\cball{p}{\frac{1}{4}}$ and $\cball{p'}{\frac{1}{4}}$, respectively.
    Since $\{u,v\}, \{v,w\}$ share a vertex, $dist(q,q') \leq \frac{3}{2} < \delta$, which again is a contradiction to $S$ being $\delta$-dispersed.

    Since $|S_V| + |S_E| = |S|$, it holds that an optimal $2$-dispersed set must have at least half the size of an optimal $\delta$-dispersed set.

    To see that this factor is tight, consider a graph consisting of $\ell$ triangles connected to a central vertex.
    For an example, see \Cref{fig:approxLargerThreeHalvesLowerBound}.
    For $\delta \in (\frac{3}{2}, \frac{5}{3}]$, an optimal solution can place exactly two points on each triangle gadget:
    One on the midpoint of the triangle edge opposite to the vertex connected to the central vertex, and one with distance $\frac{1}{6}$ from the vertex connected to the central vertex on the edge to the central vertex.
    To see that it is not possible to place more than two points per triangle gadget, note that for $\delta > \frac{3}{2}$ no more than one point can be placed on a triangle.
    For $\delta = 2$, there can be at most one point on the edges connecting the triangle gadgets to the central vertex.
    Additionally, there can still be only one point per triangle, resulting in a total number of $\ell + 1$ points.
    Thus we get a ratio of $\frac{2\ell}{\ell+1}$, which approaches $2$ as $\ell \rightarrow \infty$.
\end{proof}

Now we improve this approximation for $\delta \rightarrow 2$.
Instead of giving a heuristic solution that improves as $\delta$ approaches $2$, we show that a $\delta$-dispersed set is not much larger than a $2$-dispersed set, showing that using a $2$-dispersed set is a better approximation than suggested by \Cref{lemma:approxLargerThreeHalves} as $\delta \rightarrow 2$.

The idea is to split a $\delta$-dispersed set into disjoint subsets, such that removing one of them allows us to move the remaining points such that they are $2$-dispersed.
By increasing the number of sets as $\delta \rightarrow 2$, the approximation factor approaches $1$.

\begin{lemma}
    \label{lemma:approxLargerOneApproachingTwo}
    For any graph $G$ and $\delta > \frac{4k+1}{2k+1}$ for $k \in \mathbb N^+$, $\frac{k}{k+1} \cdot \dispn \geq \dispn[2]$.
\end{lemma}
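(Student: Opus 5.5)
The plan follows the idea sketched before the statement. Take an arbitrary $\delta$-dispersed set $S$ with $\delta > \frac{4k+1}{2k+1}$, and split it into $k+1$ disjoint classes $S_0,\dots,S_k$. We then show that for every $j$, the set $S \setminus S_j$ can be moved to a $2$-dispersed set of the same size. Averaging over $j$ yields a class with $|S_j| \leq \frac{1}{k+1}|S|$, so $\dispn[2] \geq \frac{k}{k+1}|S|$, which is the claim.

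\textbf{Defining the classes.} I would define them by the position of a point relative to the vertices, refining the two-class split of \Cref{lemma:approxLargerThreeHalves}. Each point $p \in S$ has a distance $d(p) \in [0,\half]$ to its nearest vertex. Partition $[0,\half]$ into $k+1$ consecutive intervals $I_0,\dots,I_k$, with $I_0$ near $0$ (these points are snapped to vertices) and $I_k$ near $\half$ (these points are snapped to midpoints). Put $p$ into $S_j$ if $d(p) \in I_j$.

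\textbf{Moving the points.} When class $S_j$ is removed, every point in $\bigcup_{i<j} S_i$ is moved to its nearest vertex, and every point in $\bigcup_{i>j} S_i$ is moved to the midpoint of its edge. A pair of moved points that are both snapped to vertices must end up on non-adjacent, distinct vertices. A pair both snapped to midpoints must lie on non-incident edges. A vertex point and a midpoint must have distance at least $2$, i.e.\ the vertex is not an endpoint of that edge and is not adjacent to an endpoint in the relevant way.

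Each of these conditions follows from a shortest-path argument. If the moved points are closer than $2$, there is a short walk of at most two edges between the original points. The original distance is then at most the sum of the edge lengths minus the offsets, which should be $<\delta$ provided the gap created by the removed interval $I_j$ is wide enough.

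\textbf{Main obstacle.} The hard part is choosing the interval boundaries uniformly so that the removed band $I_j$, for every $j$, always separates vertex-snapped points from midpoint-snapped points sufficiently. The slack available is $2-\delta < \frac{1}{2k+1}$. I would first try equal-width intervals of width $\frac{1}{2(k+1)}$, or rather boundaries at multiples of $\frac{1}{2(2k+1)}$, so that $k+1$ classes fit and the threshold $\frac{4k+1}{2k+1} = 2 - \frac{1}{2k+1}$ arises exactly from the worst-case pair. That worst case is two points on incident edges, one just below and one just above the removed band.

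The delicate cases are:
\begin{itemize}
\item two points snapped to adjacent vertices;
\item a vertex-snapped point $u$ and a midpoint-snapped point on an edge incident to a neighbour of $u$, where the distance after moving is $1.5$.
\end{itemize}
The second case forces a relation between the band edges $a < b$ of the form $1 + a + \ldots \geq$ something. I would check these inequalities casewise and adjust the interval lengths (possibly unequal, with $I_0$ and $I_k$ being special) until all cases close under the hypothesis on $\delta$. If a single partition fails, a fallback is to let the cutoffs depend on $j$, i.e.\ use a different threshold pair $(a_j,b_j)$ for each removed class, while keeping the classes disjoint.
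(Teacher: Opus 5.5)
Your route works and is genuinely different from the paper's. As written, though, it stops at the decisive step: the cutoffs are never fixed, and only one of your two candidates is valid.

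\textbf{Completing the argument.} Put $\varepsilon=2-\delta$ (for $\delta\geq 2$ there is nothing to prove). Let $I_0=[0,b_1]$, $I_j=(b_j,b_{j+1}]$ for $1\leq j\leq k-1$, and $I_k=(b_k,\half]$. If you bound the original distance along the obvious connecting path, the conditions you list reduce to exactly three constraints.
\begin{itemize}
\item Two midpoint-snapped points on edges sharing a vertex were at distance less than $2(1-b_1)$. This forces $b_1\geq\varepsilon/2$.
\item Two vertex-snapped points on adjacent vertices were at distance at most $2b_k+1$. This forces $b_k<\frac{1-\varepsilon}{2}$.
\item When $S_j$ with $1\leq j\leq k-1$ is deleted, take a point within $b_j$ of $u$ and a point on $\{x,y\}$ farther than $b_{j+1}$ from both ends, with $u$ adjacent to $x$. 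They were at distance less than $b_j+1+(1-b_{j+1})$. This forces $b_{j+1}-b_j\geq\varepsilon$.
\end{itemize}
Collisions and the distance-$\half$ case are automatic, since $b_k<\half$ and $\delta>1$.

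These constraints are simultaneously satisfiable iff $k\varepsilon<\half$. Two valid choices are $b_j=(j-\half)\varepsilon$ and your odd multiples $b_j=\frac{2j-1}{2(2k+1)}$. Your other candidate, equal widths $\frac{1}{2(k+1)}$, violates the third constraint for $k\geq 2$ once $\varepsilon>\frac{1}{2k+2}$, which the hypothesis allows.

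The partition is fixed in advance, and the thresholds used when deleting $S_j$ are simply the endpoints of $I_j$. So no $j$-dependent fallback is needed; a reclassification that changed with $j$ would in fact break the averaging step.

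The threshold $\frac{4k+1}{2k+1}$ also does not arise exactly. The two end classes need width only about $\varepsilon/2$ and the inner ones width $\varepsilon$, so the total needed is $k\varepsilon$. Your argument therefore gives the bound for all $\delta>\frac{4k-1}{2k}$, a strictly larger range than the lemma's. Like the paper's proof, you establish $\dispn[2]\geq\frac{k}{k+1}\dispn$, which is the intended reading of the statement's displayed (reversed) inequality.

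\textbf{Comparison with the paper.} The paper does not classify points by position. It splits $S$ as follows:
\begin{itemize}
\item $S^*$: points on vertices, or on edges with an incident edge that also carries a point of $S$;
\item $S_i$: points whose edge is reached from $G^*$ by a shortest alternating path with $i$ point-free edges;
\item $S_{mid}$: the rest.
\end{itemize}
A chaining and pigeonhole argument along these paths then yields the positional facts it needs. Points of $S^*$ sit within $2-\delta$ of a vertex of $G^*$, and a point of $S_i$ lies within $(i+1)(2-\delta)$ of the endpoint of its edge not reached by the path. After that it performs the same delete-one-class-and-snap step you use.

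Your positional classes make this structural analysis unnecessary, since every check is a single local path. Your argument is exactly the $(k+1)$-class version of the $S_V/S_E$ split in \Cref{lemma:approxLargerThreeHalves}, which is your case $k=1$ with $b_1=\frac14$. It also gives a better range: $\delta>\frac32$ instead of $\delta>\frac53$ for $k=1$, and $\delta>\frac74$ instead of $\delta>\frac95$ for $k=2$. The paper's path-based classes encode more of the shape of $S$, but none of that information is needed for this bound.
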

\begin{proof}
    We consider some optimal $\delta$-dispersed set $S$, and partition the points of $S$ into $k+1$ sets:
    $S^*$, $S_i$ for $i \in [k-1]$ and $S_{mid}$.
    All points from $S$ that are on a vertex or on an edge such that there is an incident edge that also contains a point from $S$ are contained in $S^*$.
    Let $G^*$ be the graph induced by the vertices and edges that contain a point from $S^*$.
    We say a path starting in a vertex $v$ from $G^*$ with $\cball{v}{\half} \cap S \neq \emptyset$ is \emph{alternating}, if the first edge does not contain a point from $S$, and afterwards every other edge contains a point from $S$.
    For a point $p \in S \setminus S^*$ let $e_p$ be the edge containing $p$.
    Then $p \in S_i$ if and only if the shortest alternating path from any vertex in $G^*$ reaching either endpoint of $e_p$ contains exactly $i$ edges that do not contain a point from $S$.
    Finally, $S_{mid} = S \setminus (S^* \cup \bigcup_{i \in [k-1]}S_i)$.

    Let $v$ be a vertex in $G^*$.
    If $\cball{v}{\half} \cap S \neq \emptyset$ then also $\cball{v}{2-\delta} \cap S \neq \emptyset$ (note that $2 - \delta < \half$).
    Assume this is not the case.
    Then $v$ is not isolated in $G^*$, as otherwise there would be a point on $v$.
    Let $e = \{v, w\}$ be the edge where the $p \in \cball{v}{\half} \cap S$ is located, and let $p = p(v, w, \lambda)$.
    Since $\delta > \frac{3}{2}$, there is no point from $S$ on the other edges incident to $v$.
    If $\lambda > 2-\delta$, any point on an edge incident to $w$ has distance strictly less than $1 + (1 - (2 - \delta)) = \delta$ to $p$ and therefore cannot be contained in $S$, which is a contradiction to the construction of $S^*$.
    Thus, $\lambda \leq 2-\delta$ must hold.
    
    For any $p \in S_i$ let $e_p = \{u_p, v_p\}$ be the edge containing $p$, and w.l.o.g.\ let $u_p$ be the vertex that is reached by the shortest alternating path from a vertex in $G^*$.
    Then $p = p(u_p, v_p, \lambda_p)$ with $\lambda_p \geq 1- (i+1) \cdot (2-\delta)$.
    Assume that this is not the case, i.e. $\lambda_p < 1- (i+1) \cdot (2-\delta)$.
    Consider the sequence of points $q_1, \dots, q_{i+1} \in S$ along the shortest alternating path from vertex $v \in V(G^*)$ to $u_p$ with $q_1 \in \cball{v}{2-\delta}$ and $q_{i+1} = p$.
    By definition of $S_i$ and the arguments above, such a sequence must exist.
    Then 
    %$d(p, q_1) < 2i - 1 + (2-\delta) + 1 - (i+1) \cdot (2-\delta) = 2i - i \cdot (2-\delta) = i\cdot\delta$ 
    \begin{equation*}
        d(p, q_1) < 2i - 1 + (2-\delta) + 1 - (i+1) \cdot (2-\delta) = 2i - i \cdot (2-\delta) = i\cdot\delta
    \end{equation*}
    and thus by the pigeon-hole principle there are two consecutive points in the sequence $q_1, \dots, q_{i+1}$ with distance less than $\delta$ from each other, which is a contradiction to $S$ being $\delta$-dispersed.

    From $\lambda_p \geq 1- (i+1) \cdot (2-\delta) > \half$ for any $k \in \mathbb N^+$ it follows that the choice of $u_p$ above is unique, as there cannot be an alternating path of the same length reaching $v_p$.
    Thus, for any vertex $v_p$ for $p \in \bigcup_{i \in [k-1]}S_i$ and $w \in V(G^*)$ with $\cball{w}{\half} \cap S \neq \emptyset$, the edge $\{v_p, w\} \notin E(G)$.
    Further, it holds that $1 - j \cdot (2-\delta) > 1- (j+1)\cdot(2-\delta)$, and by choice of $\delta$, 
    %$1 - k\cdot (2-\delta) > 1 - 2k + k\cdot\frac{4k+1}{2k+1} = 1 - k + \frac{2k^2}{2k+1} = 1 - k + \frac{2\cdot(k+1)(k-1) + 2}{2k+1} = 1 - k + (k-1)\frac{2k + 2}{2k+1} + \frac{2}{2k+1} = 1-k + (k-1)(1 + \frac{1}{2k+1}) + \frac{2}{2k+1} = \frac{k+1}{2k+1} = 1 - \frac{k}{2k+1} > 1 - \frac{\delta-1}{2}$.
    \begin{align*}
        1 - k\cdot (2-\delta) &> 1 - 2k + k\cdot\frac{4k+1}{2k+1} = 1 - k + \frac{2k^2}{2k+1} = 1 - k + \frac{2\cdot(k+1)(k-1) + 2}{2k+1} \\
        &= 1 - k + (k-1)\frac{2k + 2}{2k+1} + \frac{2}{2k+1} = 1-k + (k-1)(1 + \frac{1}{2k+1}) + \frac{2}{2k+1} \\
        &= \frac{k+1}{2k+1} = 1 - \frac{k}{2k+1} > 1 - \frac{\delta-1}{2}.
    \end{align*}
    Therefore, for any two vertices $v_p, v_q$ for $p, q \in \bigcup_{i \in [k-1]}S_i$ as defined above, $\{v_p, v_q\} \notin E(G)$, as otherwise $d(p,q) < \delta$.
    Then by definition of the sets $S^*$, $S_i$ for $i \in [k-1]$ and $S_{mid}$, the following sets are $2$-dispersed:
    \begin{enumerate}
        \item $\mathcal S_0 = \{p(u,v,0) ~|~ \oball{u}{\half} \cap (S \setminus S_{mid}) \neq \emptyset\}$, i.e. we move all points to their nearest vertex (note that by the arguments above, no point in $S \setminus S_{mid}$ is on the midpoint of an edge), and delete the points in $S_{mid}$.
        \item For $i \in [k-1]$ we set 
        %$\mathcal S_i = \{p(u,v,0) ~|~ \oball{u}{\half} \cap (S^* \cup \bigcup_{j \in [i-1]} S_j)\} \cup \{p(u,v,\half) ~|~ p(u,v,\lambda) \in S_{mid} \cup \bigcup_{j \in [k-1]\setminus[i]} S_j\}$,
        \begin{align*}
            \mathcal S_i &= \{p(u,v,0) ~|~ \oball{u}{\half} \cap (S^* \cup \bigcup_{j \in [i-1]} S_j)\} \\
            &\phantom{{}={}}\cup \{p(u,v,\half) ~|~ p(u,v,\lambda) \in S_{mid} \cup \bigcup_{j \in [k-1]\setminus[i]} S_j\},
        \end{align*}
        i.e. we delete the set $S_i$, and move all points in $S^*$ and $S_j$ with $j < i$ to their nearest vertex, while we move remaining points to the midpoint of their edge.
        \item $\mathcal S_{k+1} = \{p(u,v,\half) ~|~ p(u,v,\lambda) \in S \setminus S^*\}$, i.e. we delete all points from $S^*$ and move the remaining points to the midpoint of their edge.
    \end{enumerate}
    By the arguments above, all vertices where a point can get moved to are an independent set in $G$, and none of the edges on whose midpoints a point can get placed share a vertex.
    Therefore, the sets $\mathcal S_0$ and $\mathcal S_{k+1}$ are $2$-dispersed.
    For the sets $\mathcal S_i$, the points placed on vertices and midpoints of edges are $2$-dispersed when considered separately by the same argument.
    Additionally, by the definition of $S_i$, the vertices in $\{v ~|~ \cball{v}{\half} \cap (S^* \cup \bigcup_{j \in [i-1]} S_j) \neq \emptyset\}$ and $\{w ~|~ \cball{w}{\half} \cap (S_{mid} \cup \bigcup_{j \in [k-1]\setminus[i]} S_j) \neq \emptyset\}$ have distance at least $3$, since the points in $S_i$ are in the interior of an edge, and there can only be edges without points from $S$ in their interior incident to those edges.
    Therefore, the sets $\mathcal S_i$ are also $2$-dispersed.

    By the pigeon-hole principle, one of the sets $S^*$, $S_i$ for $i \in [k-1]$ and $S_{mid}$ must have size at most $\frac{1}{k+1}\cdot|S|$.
    Then the respective set $\mathcal S_i$ not containing that set has size $\frac{k}{k+1}\cdot|S|$ and is $2$-dispersed.
    Thus the claim follows.
\end{proof}

With this lemma, we showed that computing a $2$-dispersed set, which is also $\delta$-dispersed for $\delta \leq 2$, is a $\frac{k+1}{k}$-approximation for $\delta > \frac{4k+1}{2k+1}$ and $k \in \mathbb N^+$.

\begin{corollary}
    \label{cor:approxLargerOneApproachingTwo}
    For any graph $G$ and $\delta > \frac{4k+1}{2k+1}$ for $k \in \mathbb N^+$, \disp{} can be approximated in polynomial time with a factor of $\frac{k+1}{k}$.
\end{corollary}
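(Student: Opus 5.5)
The corollary is a direct consequence of \Cref{lemma:approxLargerOneApproachingTwo} combined with \Cref{lemma:polyTimeTwiceUnitFraction}. The algorithm simply computes an optimal $2$-dispersed set $S_2$ of $G$. By \Cref{lemma:polyTimeTwiceUnitFraction} with $k=1$, this takes polynomial time, since $2 = \frac{2}{1}$. If $G$ is disconnected, we can treat each connected component separately, as noted in the preliminaries.

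Next we check feasibility. Any $2$-dispersed set is $\delta$-dispersed whenever $\delta \leq 2$, because every pairwise distance is at least $2 \geq \delta$. For $\delta \geq 2$ the interval condition $\delta > \frac{4k+1}{2k+1}$ still holds, but then the claim should be read for $\delta \le 2$. Indeed, $\frac{4k+1}{2k+1} < 2$ for every $k$, so the nonempty range in question is $(\frac{4k+1}{2k+1}, 2]$; for $\delta = 2$ the problem is solved exactly. Hence $S_2$ is a feasible solution for \disp.

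For the ratio, \Cref{lemma:approxLargerOneApproachingTwo} gives $\dispn[2] \geq \frac{k}{k+1}\cdot\dispn$, i.e.\ $\dispn \leq \frac{k+1}{k}\,|S_2|$. So the returned solution has size at least a $\frac{k}{k+1}$ fraction of the optimum, which is the claimed factor $\frac{k+1}{k}$.

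There is no real obstacle here. The only care needed is to restrict to $\delta \le 2$, so that the $2$-dispersed set is feasible. All combinatorial difficulty sits in \Cref{lemma:approxLargerOneApproachingTwo}, which we take as given.
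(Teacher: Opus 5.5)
Your proposal is correct and follows the paper's argument exactly: compute an optimal $2$-dispersed set in polynomial time via \Cref{lemma:polyTimeTwiceUnitFraction}, note that it is $\delta$-dispersed for $\delta \le 2$, and bound the ratio using \Cref{lemma:approxLargerOneApproachingTwo}. You were also right to read that lemma as $\dispn[2] \geq \frac{k}{k+1}\cdot\dispn$, which is the direction its proof establishes, and to restrict to $\delta \le 2$, since for $\delta > 2$ the problem is \pAPX-hard.
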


\subsection{Approximations for $\frac{2}{3} < \delta < 1$}

Next, we derive an approximation for the interval $(\frac{2}{3}, 1)$.
As $\delta \rightarrow 1$, the approximation factor also approaches $1$, and for $\delta > \frac{2}{3}$ it starts with $\frac{3}{2}$.
In contrast to the previous section, we give no approximation scheme whose approximation factor approaches $1$ as $\delta \rightarrow \frac{2}{3}$, since, as we show in \Cref{sec:hardness}, it is unlikely to exist unless $\PTIME = \NP$.
Instead of giving a heuristic that improves as $\delta \rightarrow 1$, we show that the size of the optimal solution approaches the size of a $1$-dispersed set. 
For that, we give bounds on the maximum size of a $\delta'$-dispersed set as $\delta' \rightarrow \infty$.
We remind the reader that $\autodispn \leq \dispn$ for any graph $G$, and for $\delta \leq 3$ even $\autodispn = \dispn$.
Thus, we may use \Cref{theorem:translationOriginal} to translate these upper bounds to $\delta \in (\frac{2}{3}, 1)$.

\begin{lemma}
    \label{lemma:approxSmallerOne}
    For any graph $G$ and $\delta > \frac{2k}{2k+1}$ for $k \in \mathbb N^+$, an optimal $1$-dispersed set $S$ is a $\frac{k+2}{k+1}$-approximation for \disp. 
    Further, for $\delta \leq \frac{2k+2}{2k+3}$, this bound is tight.
\end{lemma}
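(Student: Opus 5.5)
The plan is to reduce the statement to an upper bound on $\delta'$-auto-dispersed sets for large $\delta'$, using \Cref{theorem:translationOriginal}. Write $\delta = \frac{\delta'}{\delta'+1}$, i.e.\ $\delta' = \frac{\delta}{1-\delta}$. The map $x \mapsto \frac{x}{1-x}$ is increasing, so $\delta > \frac{2k}{2k+1}$ is equivalent to $\delta' > 2k$. As noted in the preliminaries, we may assume $G$ is connected and not a tree. Then \Cref{lemma:polyTimeUnitFraction} gives $\dispn[1] = |E(G)|$. Since $\delta < 1 \le 3$, we have $\dispn = \autodispn$, and \Cref{theorem:translationOriginal} gives
\[
\dispn = \autodispn[\delta'] + |E(G)| .
\]
The ratio $\dispn / \dispn[1]$ is therefore $1 + \autodispn[\delta'] / |E(G)|$. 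The claimed factor $\frac{k+2}{k+1}$ is thus equivalent to
\[
\autodispn[\delta'] \le \frac{|E(G)|}{k+1} \qquad \text{for all } \delta' > 2k .
\]

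To prove this bound I would use a packing argument. Let $S$ be a $\delta'$-auto-dispersed set and assign to each $p \in S$ the open ball $\oball{p}{\delta'/2}$. These balls are pairwise disjoint, since a common point would give two points of $S$ at distance less than $\delta'$. So it suffices to show that each ball contains edge length at least $k+1$. Summing over $S$ then gives $(k+1)|S| \le |E(G)|$.

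For the per-ball bound I would use the auto-dispersion condition: no locally injective closed walk through $p$ has length less than $\delta'$. Hence the ball of radius $\delta'/2 > k$ around $p$ contains no cycle and looks like a tree centred at $p$. Because $G$ is connected and not a tree, starting at $p$ one can walk along a geodesic towards a cycle, or, if $p$ lies on a cycle, in two directions along it. The radius exceeds $k$, so these walks cover length more than $k$ inside the ball. The delicate part is obtaining $k+1$ rather than just $k$. I would try to show that a single geodesic ray of length $>k$ cannot be the whole ball, since otherwise $p$ would sit near the end of a pendant path. One must also handle the case where the ball meets a leaf or pendant structure. If a direct argument fails, I would preprocess pendant trees via the \XP/tree algorithm of \cite{DBLP:conf/mfcs/HartmannL22}, or charge leftover length more carefully using integrality of the counts. \textbf{This is the step I expect to be the main obstacle.}

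For tightness with $\delta \le \frac{2k+2}{2k+3}$, i.e.\ $\delta' \le 2k+2$, I would build a family of non-tree graphs $G_\ell$ with $\autodispn[\delta'] = \frac{|E(G_\ell)|}{k+1} - O(1)$. One candidate is a cycle with a pendant path of length roughly $k+1$ attached at every vertex. Another is the wheel- and triangle-style gadgets used in \Cref{lemma:approxLargerThreeHalves}, with edges subdivided so that each point can be spaced exactly $\delta' \le 2k+2$ apart and "owns" exactly $k+1$ edges. For such a family, \Cref{theorem:translationOriginal} gives $\dispn / \dispn[1] \to \frac{k+2}{k+1}$ as $\ell \to \infty$, matching the upper bound.
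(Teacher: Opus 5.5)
Your reduction via \Cref{theorem:translationOriginal} to the inequality $\autodispn[\delta'] \le |E(G)|/(k+1)$ for $\delta' > 2k$ is exactly what the paper does. The packing step you propose for proving it, however, does not work: the claim that every ball $\oball{p}{\delta'/2}$ contains edge length at least $k+1$ is false, even in connected non-tree graphs under auto-dispersion.

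Take a cycle $c_1,\dots,c_m$ with $m\ge 3$, attach to every $c_i$ a path of length $k$ ending in a leaf $\ell_i$, and let $2k<\delta'\le 2k+1$. The leaves are pairwise at distance at least $2k+1$. Every locally injective closed walk from a leaf must go around the whole cycle, so it has length at least $2k+m>\delta'$. Hence $\{\ell_1,\dots,\ell_m\}$ is $\delta'$-auto-dispersed. But $\oball{\ell_i}{\delta'/2}$ consists of the pendant path plus a segment of length $\delta'/2-k\le\tfrac{1}{2}$ on each cycle edge at $c_i$, so its length is $\delta'-k<k+1$. No per-ball bound better than $\delta'-k$ exists, and as $\delta'\downarrow 2k$ your packing yields only $|S|\le |E(G)|/k$, i.e.\ the factor $\frac{k+1}{k}$.

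In this example $|S|=m=|E(G)|/(k+1)$ holds with equality. The missing length per point comes from the parts of the cycle that lie in no ball at all. That ``$+1$'' is a global consequence of $G$ not being a tree, and no per-ball estimate can see it. Your fallbacks do not supply it either: pendant trees interact with the rest of the solution and cannot simply be cut off, and the integrality argument is unspecified.

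The missing idea is to count vertices instead of length, and to use non-treeness only at the very end, through $|V(G)|\le|E(G)|$. The paper sets $V_p=\cball{p}{k}\cap V(G)$ for $p\in S$; these sets are pairwise disjoint because $\delta'>2k$. Assume $|S|\ge 2$, so some vertex lies outside $V_p$.
\begin{itemize}
\item If $p$ lies on a vertex $v$, the shortest path from $v$ to the nearest vertex outside $V_p$ has at least $k+1$ vertices in $V_p$.
\item If $p$ lies in the interior of $\{u,v\}$, let $w$ be the outside vertex nearest to the edge, with $v$ the closer endpoint. The shortest $v$--$w$ path avoids $u$ and contributes at least $k$ vertices of $V_p$. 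Together with $u$ this gives $|V_p|\ge k+1$.
\end{itemize}
Hence $|S|\le |V(G)|/(k+1)\le |E(G)|/(k+1)$. The case $|S|=1$ is settled by $|V(G)|\ge k+1$ or by brute force on constant-size graphs. In the example above, each $V_{\ell_i}$ is exactly the $k+1$ vertices of its pendant path, so leaves cause no trouble for this argument.

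Your tightness candidates are also not pinned down. With pendant paths of length $k$, adjacent leaves are only $2k+1$ apart, so the construction fails for $\delta'\in(2k+1,2k+2]$. With length $k+1$, each pendant path together with the half cycle edges at its base has diameter $k+\tfrac{3}{2}<2k+2$. So at $\delta'=2k+2$ there are at most $m$ points among $m(k+2)$ edges. The paper instead uses the star with every edge subdivided into a path of $k+1$ edges. Its $x$ leaves are pairwise exactly $2k+2$ apart and $|E|=(k+1)x$, so the ratio tends to $\frac{k+2}{k+1}$.
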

\begin{proof}
    We give an upper bound on \dispn[(2k+\varepsilon)] and then use \Cref{theorem:translationOriginal} to obtain an upper bound on \dispn[(\frac{2k}{2k+1}+\varepsilon')] for $\varepsilon, \varepsilon' > 0$.
    Let $S^*$ be a $(2k+\varepsilon)$-dispersed set in $G$.
    For now, we assume that $|S^*| \geq 2$.
    For a point $p \in S^*$, let $V_p = \cball{p}{k} \cap V(G)$.
    Then for two points $p,q \in S^*$, $V_p \cap V_q = \emptyset$.
    Further, $V_p$ contains at least $k+1$ vertices:
    \begin{enumerate}
        \item If $p$ is on an edge $e = \{u, v\}$, then $V_p$ contains its endpoints. 
            Further, since $|S^*| > 1$, there must be a vertex $w \in V(G) \setminus V_p$, and since $G$ is connected, there must be a path from $v$ to $w$.
            Now let $w$ be the vertex in $V(G) \setminus V_p$ with the shortest distance to either endpoint of $e$, and w.l.o.g.\ let that endpoint be $v$.
            Then the path from $v$ to $w$ contains $k+1$ vertices such that $d(v,w) = k$ (as otherwise $d(p, w) \leq k$), and $k$ of them are contained in $V_p$ (all but $w$).
            Together with $u$, $V_p$ thus contains at least $k+1$ vertices.
        \item Otherwise, $p$ is on a vertex $v$.
            Further, since $|S^*| > 1$, there must be a vertex $w \in V(G) \setminus V_p$, and since $G$ is connected, there must be a path from $v$ to $w$.
            Now let $w$ be the vertex in $V(G) \setminus V_p$ with the shortest distance to $v$.
            Then $d(v,w) = k+1$, and thus the shortest path from $v$ to $w$ contains $k+2$ vertices, and all of them but $w$ are contained in $V_p$.
            Thus $V_p$ contains at least $k+1$ vertices.
    \end{enumerate}
    Since for any two points $p,q \in S^*$, $V_p \cap V_q = \emptyset$, it follows that $|S^*| \leq \frac{1}{k+1}|V(G)|$.

    Now consider the case of $|S^*| = 1$.
    If $|V(G)| \geq k+1$, then $|S^*| \leq \frac{1}{k+1}|V(G)|$, and otherwise $G$ has constant size and thus \dispn{} can be computed in constant time by brute force for any rational $\delta$ as \disp{} is in \NP~\cite{DBLP:journals/algorithmica/GrigorievHLW21}.

    Thus, w.l.o.g., $\dispn[(2k+\varepsilon)] \leq \frac{1}{k+1}|V(G)|$, and by \Cref{theorem:translationOriginal} we have $\dispn[(\frac{2k}{2k+1}+\varepsilon')] = |E(G)| + \frac{1}{k+1}|V(G)|$.
    By \Cref{lemma:polyTimeUnitFraction}, $\dispn[1] = |E(G)|$, and thus a $(\frac{2k}{2k+1}+\varepsilon')$-dispersed set can be at most 
    $$
        \frac{|E(G)| + \frac{1}{k+1}|V(G)|}{|E(G)|} \leq 1 + \frac{|V(G)|}{(k+1)|E(G)|} \leq \frac{k+2}{k+1}
    $$
    times larger than a $1$-dispersed set.

    To see that this approximation factor is tight, consider a star subdivided $k+1$ times.
    Then an optimal $\delta$-dispersed set for $2k < \delta \leq 2k+2$ has the same size as the number of leaves of the star, and is also auto-dispersed.
    Then by \Cref{theorem:translation} we have equality in the above equations.\footnote{Only asymptotically, as the number of vertices of a $(k+1)$-times subdivided star is $(k+1)\cdot x + 1$ where $x$ is the number of leaves.}
\end{proof}

\section{Hardness of Approximation}
\label{sec:hardness}

In this section, we complement our approximations from \Cref{sec:approximations} with \APX- and \pAPX-hardness results.
We begin with \pAPX-hardness for all $\delta > 2$, explaining the lack of approximations for that range of $\delta$ in the previous section, and then show \APX-hardness for all remaining $\delta$ that are not covered by \Cref{lemma:polyTimeUnitFraction,lemma:polyTimeTwiceUnitFraction}.
Throughout this section, we also use $n$ for the number of vertices of a given graph.

\subsection{Hardness for $\delta > 2$}

In this section, we show that \disp{} is \pAPX-complete for any $\delta > 2$.
To see that \disp{} is also contained in \pAPX, recall \Cref{lemma:maximumSolutionSizeDeltaLargerOne}, that for any $\delta > 1$ it holds that $\dispn < |V(G)|$ and thus \disp{} is contained in \pAPX.
To show hardness, we give reductions from {\sc Independent Set}, which is known to be hard to approximate with factor $n^{1-\varepsilon}$ for any $\varepsilon > 0$~\cite{zuckermanIndependentSetHardness}, and therefore \pAPX-hard.
In particular, we show hardness for the intervals $(2, 3]$ and $(3, 4]$ using reductions from {\sc Independent Set}, and then use the Subdivision Reduction (see \Cref{def:subdivisionReduction}) to extend the result for $(2, 3]$ to $(2k, 3k]$ for $k \in \mathbb N^+$.
Since $\bigcup_{k \in \mathbb N^+} (2k, 3k] = (2, \infty) \setminus (3, 4]$, these intervals together cover the entire interval $(2, \infty)$.

\begin{lemma}
    \label{lemma:hardnessTwoToThree}
    \disp{} is \NP-hard to approximate within a factor better than $(\frac{n}{2})^{1-\varepsilon}$ for any $\delta \in (2, 3]$ and any $\varepsilon > 0$.
\end{lemma}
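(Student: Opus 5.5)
We reduce from {\sc Independent Set} on a graph $H$ with $n_H$ vertices, which is \NP-hard to approximate within $n_H^{1-\varepsilon}$ for every $\varepsilon>0$~\cite{zuckermanIndependentSetHardness}. The construction is an S-reduction. Take $G = H_2$, the $2$-subdivision of $H$, and attach a pendant edge $\{v,v'\}$ with a new leaf $v'$ to every original vertex $v \in V(H)$. Then $G$ has $n = 2n_H + |E(H)|$ vertices, so this direct version gives only a factor of order $n^{1/2}$. To obtain $(\frac{n}{2})^{1-\varepsilon}$ we must avoid the $|E(H)|$ subdivision vertices. So instead I plan to keep $H$ unsubdivided and only attach a pendant leaf $v'$ to every $v \in V(H)$. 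This gives $n = 2n_H$ vertices, which matches the $\frac{n}{2}$ in the statement exactly.

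For the forward direction, let $I$ be an independent set of $H$. Put a point on the leaf $v'$ for each $v \in I$. Two leaves $u',w'$ with $u,w$ nonadjacent have distance at least $1+2+1=4 \geq \delta$. A single extra point placed somewhere on $H$ would also be acceptable (leaves of non-$I$ vertices are unused anyway), but it only changes the count by a constant and is unnecessary. Hence $\dispn \geq \alpha(H)$.

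For the backward direction, the function $g$ maps a $\delta$-dispersed set $S$ to an independent set of $H$ of size at least $|S|-1$. Assign each point $p \in S$ to a vertex of $H$: if $p$ lies on a pendant edge $\{v,v'\}$ or on $v$, assign it to $v$; if $p$ lies in the interior of an edge $\{u,w\}$ of $H$, assign it to the nearer endpoint (ties broken arbitrarily). Any point assigned to $v$ is within distance $\frac{1}{2}$ of $v$ when it lies on $H$, and within distance $1$ of $v$ when it lies on the pendant edge.

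The main obstacle is bounding collisions. Two points assigned to the same vertex $v$ have distance at most $2 < \delta$, so each vertex receives at most one point. The hard case is two points assigned to adjacent vertices $u,w$, which can be as far apart as $1+1+1=3 \geq \delta$ when both sit on leaves. That pair is harmless, since leaves of adjacent vertices can both be chosen for $\delta \le 3$. This shows the plain pendant construction fails: its optimum can exceed $\alpha(H)$ substantially.

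So I would instead attach a pendant path of length $\frac{1}{2}$, i.e.\ scale so the "leaf" sits at distance $r$ from $v$ with $2r + 1 < \delta$. Leaves of adjacent vertices are then at distance $2r+1 < \delta$, while leaves of nonadjacent vertices are at distance at least $2r+2$. Choosing $r$ with $\delta - 2 \le 2r < \delta - 1$ makes this work, but a rational pendant length is impossible with unit edges. The fallback is to use the Subdivision Reduction (\Cref{lemma:subdivisionReduction}) to scale the problem so that the lengths become integral, at the cost of extra vertices.

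Given the $\frac{n}{2}$ in the statement, I expect the authors' construction to be exactly $H$ plus one pendant vertex per vertex, with $\delta\in(2,3]$. The key step would then be an exchange argument: push every point of an optimal $S$ to the leaves, and show that the vertices whose leaves (or nearby regions) are occupied form an independent set up to an additive $1$, with the points lying inside $H$ contributing at most one via a distance argument. Settling this exchange argument, in particular the case of two leaves at distance exactly $3$ when $\delta = 3$ versus $\delta < 3$, is where I expect the real work. Once it holds, the optima coincide up to a constant, and an $(\frac{n}{2})^{1-\varepsilon}$-approximation would give an $n_H^{1-\varepsilon}$-approximation for {\sc Independent Set}.
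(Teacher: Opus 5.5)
\textbf{There is a genuine gap: your proposal never reaches a working reduction.} The construction you finally bet on is refuted by your own computation. In $H$ with one pendant leaf $v'$ per vertex, two leaves $u',w'$ are at distance exactly $3$ when $uw \in E(H)$ and at distance at least $4$ otherwise. So the set of all $n_H$ leaves is $\delta$-dispersed for every $\delta \le 3$; the distinction between $\delta = 3$ and $\delta < 3$ plays no role, since $3 \ge \delta$ on the whole interval. The optimum is therefore at least $n_H$ whatever $\alpha(H)$ is (take $H = K_{n_H}$), and no exchange argument can make it track $\alpha(H)$ up to an additive constant.

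The fractional-pendant fix cannot be rescued by the Subdivision Reduction either. That reduction maps a $\delta$-instance on $G$ to an $x\delta$-instance on $G_x$. Realizing a pendant of length $\half$ by subdividing everything by $2$ would give hardness for $2\delta \in (4,6]$, not for $\delta$. It would also push the vertex count to about $2n_H + |E(H)|$, which only yields a square-root-type bound instead of $(\frac{n}{2})^{1-\varepsilon}$.

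\textbf{The missing idea} is how to realize your own condition $\delta - 2 \le 2r < \delta - 1$ with unit edges. The value $r = \half$ satisfies it for exactly $\delta \in (2,3]$. The paper gets effective distance $\half$ as follows. Replace each vertex $v$ of $H$ by an edge between two copies $(v,w_1),(v,w_2)$, and each edge $uv$ of $H$ by all four edges between the copies of $u$ and the copies of $v$. This is the graph $H \wreath P_2$, with exactly $2n_H$ vertices.
\begin{itemize}
\item \emph{Forward.} An independent set $I$ maps to the midpoints of the edges $(v,w_1)(v,w_2)$ for $v \in I$. For distinct nonadjacent $u,v$, the copies are at distance at least $2$, so the midpoints are at distance at least $3 \ge \delta$.
\item \emph{Backward.} Every point of a $\delta$-dispersed set lies within $\half$ of some copy $(v,w_i)$; map it to $v$. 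Both copies of $v$ are adjacent to each other and to both copies of every neighbor of $v$. Hence two points mapped to the same vertex, or to adjacent vertices, are within distance $\half + 1 + \half = 2 < \delta$. So the map is injective and its image is independent.
\end{itemize}
This is an S-reduction with $n = 2n_H$, which gives the claimed bound. The essential difference from your pendant graph is that here every point is within $\half$ of a copy of its assigned vertex. In your construction, a point on a pendant edge can be at distance $1$ from $v$, and that is exactly what lets leaves of adjacent vertices coexist.
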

\begin{proof}
    We give an $S$-reduction from {\sc Independent Set} to \disp{} for $\delta \in (2, 3]$.
    For two graphs $G_1$ and $G_2$ we denote the \emph{wreath product} as $G_1 \wreath G_2$, where $V(G_1 \wreath G_2) = V(G_1) \times V(G_2)$ and 
    $$
        E(G_1 \wreath G_2) = \{\{(v_1, v_2), (v_1', v_2')\} ~|~ \{v_1, v_1'\} \in E(G_1), \text{ or } v_1 = v_1' \text{ and } \{v_2, v_2'\} \in E(G_2)\}.
    $$
    We then define $f(G) = G \wreath P_2$, where $P_2$ is the path on two vertices, which we name $w_1, w_2$.
    Further, for a $\delta$-dispersed set $S'$ in $f(G)$, we define 
    \begin{align*}
        g(f(G), S') &= \{v ~|~ p((v,w_i),(u,w_i),\lambda) \in S' \text{ with } \lambda < \half, \text{ or } \lambda = \half \text{ and } v < u\} \\
        &\hspace{0.08cm}\cup \{v ~|~ p((v,w_1),(v,w_2),\lambda) \in S'\},
    \end{align*}
    where we assume the existence of an arbitrary total ordering on the vertices of $G$.
    Since $\cball{p((v,w_1),(v,w_2),\half}{1}$ can contain at most one point from $S'$ for any $\delta > 2$, $|g(f(G), S')| = |S'|$ for any $\delta$-dispersed set $S'$ in $f(G)$.

    It remains to show that for any independent set $I$ in $G$ we can define an equal-sized $\delta$-dispersed set in $f(G)$ and that $g(f(G), S')$ is an independent set in $G$ if $S'$ is $\delta$-dispersed in $f(G)$.

    \begin{claim}
        \label{claim:hardnessTwoToThreeClaimA}
        For any independent set $I$ in $G$ we can define an equal-sized $\delta$-dispersed set in $f(G)$ for any $\delta \in (2, 3]$.
    \end{claim}
    \begin{claimproof}
        We define $S' = \{p((v,w_1),(v,w_2),\half) ~|~ v \in I\}$.
        Obviously, $|S'| = |I|$.
        Further, since $I$ is an independent set, for any two vertices $v, v' \in I$ it holds that $d(v,v') \geq 2$.
        Then $d(p((v,w_1),(v,w_2),\half), p((v',w_1),(v',w_2),\half)) \geq 3$.
        Thus $S'$ is $\delta$-dispersed for any $\delta \in (2, 3]$.
    \end{claimproof}

    \begin{claim}
        \label{claim:hardnessTwoToThreeClaimB}
        $g(f(G), S')$ is an independent set in $G$ if $S'$ is $\delta$-dispersed in $f(G)$.
    \end{claim}
    \begin{claimproof}
        Let $I = g(f(G), S')$ and for a contradiction assume that $I$ is not an independent set in $G$.
        Then there are vertices $v_1, v_2 \in I$ with $\{v_1, v_2\} \in E(G)$.
        Let $p_i \in S'$ be the point that caused $v_i$ to get added to $I$.
        Then $p_i$ has distance at most $\half$ from $(v_i, w_1)$ or $(v_i, w_2)$.
        Since $\{v_1, v_2\} \in E(G)$, it thus holds that $d(p_1, p_2) \leq 2 < \delta$, which is a contradiction to $S'$ being $\delta$-dispersed.
    \end{claimproof}

    From Claims \ref{claim:hardnessTwoToThreeClaimA} and \ref{claim:hardnessTwoToThreeClaimB} and $|g(f(G), S')| = |S'|$ it follows that the size of an optimal independent set in $G$ and an optimal $\delta$-dispersed set in $f(G)$ is the same.
    Thus $(f,g)$ is an $S$-reduction.
    Since $f(G)$ has $2|V(G)|$ vertices, we obtain the claimed inapproximability result.
\end{proof}

Next we give a reduction for the interval $(3, 4]$.

\begin{lemma}
    \label{lemma:hardnessThreeToFour}
    There is a $c \in \mathbb R^+$ such that \disp{} is \NP-hard to approximate within any factor better than $c\cdot n^{0.5-\varepsilon}$ for any $\delta \in (3,4]$ and any $\varepsilon > 0$.
\end{lemma}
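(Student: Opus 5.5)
We give an S-reduction from {\sc Independent Set}, now with a construction whose size is quadratic in $n$. This explains the exponent $0.5$: if the constructed graph has $\Theta(n^2)$ vertices, then an $N^{0.5-\varepsilon}$-approximation on it yields an $n^{1-2\varepsilon}$-approximation for {\sc Independent Set}.

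The candidate construction $f(G)$ works as follows. For every vertex $v$ of $G$ we add a vertex $a_v$ together with a pendant edge $\{a_v, b_v\}$. For every edge $\{u,v\} \in E(G)$ we add a path of length $2$, say $a_u - c_{uv} - a_v$, so that $d(a_u,a_v)=2$ when $u,v$ are adjacent. For non-adjacent $u,v$ we route through a common hub $h$ (or add length-$3$ paths) so that $d(a_u,a_v)\ge 3$; this is what requires $\Theta(n^2)$ additional vertices. The intended optimal solutions place a point on $b_v$ for each $v$ in an independent set $I$. Adjacent vertices give $d(b_u,b_v)=4$, which must be $<\delta$; non-adjacent vertices give distance at least $5\ge\delta$. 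Since $4<\delta$ fails for $\delta=4$, we scale instead: pendant paths of length $\ell$ and connecting paths chosen so that
\[
d(b_u,b_v)=2\ell+2 < \delta \quad\text{for } \{u,v\}\in E(G), \qquad d(b_u,b_v)\ge \delta \quad\text{otherwise}.
\]
The actual choice would be pendant length $1$, with adjacent pairs joined by a direct edge $a_u a_v$ (distance $3<\delta$) and non-adjacent pairs by a path of length $2$ through a private vertex $c_{uv}$ (distance $4\ge\delta$). The private vertices for all non-adjacent pairs are what make the instance quadratic. The construction must also allow no extra points: any point in the core must be within distance $<\delta$ of something. Since $\delta>3$, a core point and a pendant point almost always conflict. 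So the key difficulty is showing that at most one point lies in the core, which adds only $+1$ to the optimum. We handle this constant additive term either by a gap amplification argument or by noting that the approximation ratio is affected only by a constant factor $c$.

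For $g$, we map each solution point to the index $v$ of the nearest pendant vertex $b_v$ or $a_v$, and discard points in the core. We then show the resulting set is independent: two points assigned to adjacent $u,v$ would lie within distance $\le 3<\delta$, using that each point is within $\half$ of its gadget. Conversely, an independent set $I$ maps to the points $\{b_v : v \in I\}$, whose pairwise distances are at least $4\ge\delta$.

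The main obstacle is the distance case analysis for points lying on the connector paths $a_u c_{uv} a_v$. We must show that $\delta$-dispersion for $\delta>3$ forces essentially all but $O(1)$ points to lie near distinct pendants. The idea is that any two points in the core $\{a_v, c_{uv}\}$ are within distance $\le 3$ of each other, because every pair of vertices $a$ lies within distance $2$ in the core. Given this, the optimum is $\alpha(G)+O(1)$, and the hardness factor $c\cdot n^{0.5-\varepsilon}$ follows from the $n^{1-\varepsilon}$ hardness of {\sc Independent Set}~\cite{zuckermanIndependentSetHardness}, since the instance has $N=O(n^2)$ vertices.
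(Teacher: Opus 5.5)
\textbf{There is a genuine gap.} Your final construction can be made to work: pendant edges $a_vb_v$, an edge $a_ua_v$ for every edge of $G$, and a private vertex $c_{uv}$ for every non-edge. The argument you give for it, however, does not work.

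\emph{The core bound is false.} You claim any two core points are within distance $3$. But $c_{uv}$ is at distance $1$ from its nearest $a$-vertex, so the correct bound is $1+2+1=4$, and it is attained. If $\{u,v\}$ and $\{x,y\}$ are vertex-disjoint non-edges with no $G$-edge between them, then every $a_u$-to-$a_x$ path has length $2$, so $d(c_{uv},c_{xy})=4\geq\delta$. Hence the core can hold linearly many points.

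\emph{Concrete failure.} Take $G$ edgeless on $u_1,v_1,\dots,u_m,v_m$. The set $\{c_{u_iv_i} : i\in[m]\}$ is $4$-dispersed and has size $m=\alpha(G)/2$. Your $g$, which discards core points, maps it to $\emptyset$. So $g$ turns a $2$-approximate solution into an empty independent set, and no ``additive $O(1)$'' or gap-amplification argument can repair this. Your remark that a core point and a pendant point ``almost always conflict'' is also false: $d(c_{uv},b_x)=4$ whenever $x\notin\{u,v\}$ has no neighbour in $\{u,v\}$.

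\emph{The missing idea.} Charge every point, core points included, to a vertex of $G$ instead of discarding them:
\begin{itemize}
\item send points on $a_ub_u$ or on $a_uc_{uv}$ to $u$, choosing one side for $c_{uv}$ itself;
\item send points on $a_ua_v$ to the nearer endpoint.
\end{itemize}
Every point charged to $u$ then lies in $\cball{a_u}{1}$. Note it is within $1$ of $a_u$, not within $\half$ as you wrote, since pendant points can be that far away. So two points charged to the same vertex, or to adjacent vertices, are at distance at most $1+1+1=3<\delta$. The charging is therefore injective with independent image, which gives $\OPT=\alpha(G)$ exactly and a genuine S-reduction. Your forward direction, that $\{b_v : v\in I\}$ is $4$-dispersed, is fine.

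This charging is exactly the role of the paper's $g(G',S')=\{v : \oball{v}{1}\cap S'\neq\emptyset\}$. The paper's graph is different: it takes the $2$-subdivision of $G$ and adds a hub $v^*$ adjacent to all subdivision vertices. There every point lies within distance $2$ of $v^*$, so $\cball{v^*}{1}$ genuinely holds at most one point. The quadratic size comes from the $|E(G)|$ subdivision vertices, not from private vertices for non-edges.
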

\begin{proof}
    We give an $S$-reduction from {\sc Independent Set} to \disp{} for $\delta \in (3, 4]$.
    We define $f(G) = G'$, where $G'$ is the $2$-subdivision of $G$, with an additional vertex $v^*$ connected to all new vertices.
    Thus $d(v^*,p) \leq 2$ for any $p \in P(G')$, and therefore $d(v_e,p) \leq 3$ for any vertex $v_e \in V(G') \setminus V(G)$ and $p \in P(G)$.
    Therefore for any $\delta$-dispersed set $S'$ of $G'$ with $|S'| > 1$, $\cball{v^*}{1} \cap S' = \emptyset$.
    For a $\delta$-dispersed set $S'$ in $G'$, we define $g(G', S') = \{v \in V(G) ~|~ \oball{v}{1} \cap S' \neq \emptyset\}$ if $S'$ contains at least $2$ points, and we let $g(G', S')$ contain a single arbitrary vertex of $V(G)$ otherwise.
    Since $\oball{v}{1}$ can contain at most one point from any $\delta$-dispersed set $S'$, it is easy to see that $|g(G', S')| = |S'|$.

    It remains to show that for any independent set $I$ in $G$ we can define an equal-sized $\delta$- dispersed set in $G'$, and that $g(G', S')$ is an independent set in $G$ if $S'$ is $\delta$-dispersed in $G'$.

    \begin{claim}
        \label{claim:hardnessThreeToFourClaimA}
        For any independent set $I$ in $G$ we can define an equal-sized $\delta$-dispersed set in $G'$ for any $\delta \in (3, 4]$.
    \end{claim}
    \begin{claimproof}
        We note that by construction, any two vertices that share an edge in $G$ have distance $2$ in $G' \setminus {v^*}$ and distance at least $4$ otherwise.
        The addition of $v^*$ does not change the distance for vertices that are adjacent in $G$, but for vertices that are not adjacent in $G$ it makes sure that there is always a path of length $4$.
        Thus, we define $S' = I$, and since none of the vertices in $I$ are adjacent in $G$, they mutually have distance at least $4$ in $G'$ and are thus $\delta$-dispersed.
    \end{claimproof}

    \begin{claim}
        \label{claim:hardnessThreeToFourClaimB}
        $g(G', S')$ is an independent set in $G$ if $S'$ is $\delta$-dispersed in $G'$.
    \end{claim}
    \begin{claimproof}
        The case of $|S'| = 1$ is clear, so we only deal with the case $|S'| > 1$.
        For a contradiction, assume that there are two vertices $v, w \in g(G', S')$ that are adjacent.
        Let $e_{vw} \in V(G') \setminus V(G)$ be the vertex connecting them in $G'$.
        Further, let $p_v$ and $p_w$ be the points in $S'$ with $d(v,p_v) < 1$ and $d(w, p_w) < 1$.
        Then $p_v, p_w \notin \cball{e_{vw}}{1}$, as otherwise they would have distance at most $3$.
        Let $e_{vv'} \in V(G') \setminus V(G)$ be the vertex such that $p_v = p(v, e_{vv'}, \lambda_v)$ and respectively $e_{ww'} \in V(G') \setminus V(G)$ such that $p_w = p(w, e_{ww'}, \lambda_w)$.
        Then by construction, $v, e_{vw}, w, e_{ww'}, v^*, e_{vv'}, v$ is a cycle of length $6$, which is a contradiction to $S'$ being $\delta$-dispersed for $\delta > 3$.
    \end{claimproof}

    From Claims \ref{claim:hardnessThreeToFourClaimA} and \ref{claim:hardnessThreeToFourClaimB} and $|g(G', S')| = |S'|$ it follows that the size of an optimal independent set in $G$ and an optimal $\delta$-dispersed set in $G'$ is the same.
    Thus $(f,g)$ is an $S$-reduction.
    Since $G'$ has $\mathcal O(|V(G)|^2)$ vertices, we obtain the claimed inapproximability result.
\end{proof}

Together with the Subdivision Reduction and the above discussion, \Cref{lemma:hardnessTwoToThree,lemma:hardnessThreeToFour} yield the following theorem.

\begin{theorem}
    \label{thm:polyAPXHardness}
    \disp{} is \pAPX-complete for any $\delta > 2$.
\end{theorem}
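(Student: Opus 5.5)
The proof has two parts: membership in \pAPX{} and \pAPX-hardness for every $\delta > 2$.

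\emph{Membership.} By \Cref{lemma:maximumSolutionSizeDeltaLargerOne}, $\dispn \leq |V(G)|-1$ for every $\delta > 1$. A single point placed on any vertex is always $\delta$-dispersed. So outputting one point is an $n$-approximation, which puts \disp{} in \pAPX{} for all $\delta > 2$.

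\emph{Hardness, covering the interval.} I would split $(2,\infty)$ as $(3,4] \cup \bigcup_{k \in \mathbb N^+}(2k,3k]$. This union covers everything: for $k \geq 2$ we have $3k \geq 2(k+1)$, so consecutive intervals $(2k,3k]$ and $(2k+2,3k+3]$ overlap or touch. The only gap is $(3,4]$, which lies between $(2,3]$ and $(4,6]$. For $\delta \in (3,4]$, \Cref{lemma:hardnessThreeToFour} gives hardness directly. For $\delta \in (2k,3k]$, set $\delta_0 = \delta/k \in (2,3]$, which is rational.

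The Subdivision Reduction (\Cref{def:subdivisionReduction}) with parameter $x = k$ maps an instance $G$ of \disp[\delta_0] to $G_k$, an instance of \disp[k\delta_0] = \disp. By \Cref{lemma:subdivisionReduction} it is an L-reduction with $\alpha=\beta=1$; in fact it preserves optimum values and solution sizes exactly. Composing it with the S-reduction from {\sc Independent Set} in \Cref{lemma:hardnessTwoToThree} therefore gives an approximation-preserving reduction from {\sc Independent Set} to \disp{}.

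\emph{Main obstacle: quantitative loss.} L-reductions preserve only additive or constant-factor errors, whereas here we must transfer a polynomial ratio. I would argue that the composed reduction is actually an S-reduction. Optima coincide by the cited Grigoriev et al.\ result, and $g_k^S$ maps each point injectively, so $|g_k^S(G,S')|=|S'|$. Hence any ratio $r$ achieved on $G_k$ yields ratio $r$ on $G$.

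The remaining cost is instance size. The graph $f(G)$ from \Cref{lemma:hardnessTwoToThree} has $N = 2n$ vertices and $O(n^2)$ edges, so $G_k$ has $O(kn^2)$ vertices, and $k$ is a constant. The $n^{1-\varepsilon}$ hardness of {\sc Independent Set}~\cite{zuckermanIndependentSetHardness} then becomes a hardness factor of $\Omega(N'^{1/2-\varepsilon})$ in terms of the new size $N'$. This is still polynomial, which is all \pAPX-hardness requires, since \pAPX-hardness holds under reductions with polynomial size blow-up. The same reasoning applies to the $n^{0.5-\varepsilon}$ bound of \Cref{lemma:hardnessThreeToFour}. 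Together these give \pAPX-completeness for every $\delta > 2$.
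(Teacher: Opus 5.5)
Your proposal is correct and follows the paper's argument. Membership comes from \Cref{lemma:maximumSolutionSizeDeltaLargerOne}, and hardness from covering $(2,\infty)$ by $(3,4]$ (\Cref{lemma:hardnessThreeToFour}) together with the intervals $(2k,3k]$, which are reached from \Cref{lemma:hardnessTwoToThree} through the Subdivision Reduction. Your two additions make explicit what the paper leaves implicit: the check that the subdivision step preserves ratios exactly (it is a rescaling isometry, hence an S-reduction; in fact any L-reduction with $\alpha=\beta=1$ between maximization problems already preserves every ratio $r$), and the tracking of the $O(kn^2)$ size blow-up.
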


\subsection{Hardness for $\delta < 2$}

For the hardness for $\delta < 2$, we use the Subdivision Reduction and the Translation Reduction to translate hardness results for $\delta > 2$ to smaller intervals.
As the Translation Reduction is only an L-reduction if the solution size is at least $c \cdot |E(G)|$ for some constant $c \in \mathbb R^+$, we start our reduction chain with {\sc Independent Set} on $k$-regular graphs.
Clearly, for a $k$-regular graph, there is an independent set with at least $\frac{1}{k+1} \cdot |V(G)|$ vertices (for example by repeatedly choosing an arbitrary vertex to add to the independent set, and then deleting all its neighbors, removing at most $k+1$ vertices in each iteration).
Further, a $k$-regular graph has $\frac{k}{2}\cdot|V(G)|$ many edges, and thus the size of an independent set in a $k$-regular graph has always size at least $\frac{2}{k^2+k}\cdot|E(G)|$, a constant fraction of the number of edges.

Since {\sc Independent Set} on $k$-regular graphs is \APX-complete for $k \geq 3$ \cite{ALIMONTI2000123,PAPADIMITRIOU1991425}, \Cref{lemma:hardnessTwoToThree,lemma:hardnessThreeToFour} also yield \APX-hardness for \disp{} on graphs with solution size in $\Omega(|E(G)|)$.
Note that the reduction in \Cref{lemma:hardnessThreeToFour} does not preserve the bounded degree property, however only the lower bound on the size of the optimum solution is relevant here.
By repeatedly applying the Translation Reduction to the interval $(2, \infty)$, we obtain the following lemma.

\begin{lemma}
    \label{lemma:apxHardnessNumeratorTwo}
    For $\delta \in (\frac{2}{2x+1}, \frac{1}{x})$ and $x \in \mathbb N^+$, \disp{} is \APX-hard.
\end{lemma}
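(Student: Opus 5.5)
We argue by induction on $x$, applying the Translation Reduction (\Cref{def:translationReduction}) $x$ times to \APX-hard instances with $\delta' > 2$. The map $\delta \mapsto \frac{\delta}{\delta+1}$ is increasing, so after $x$ applications it sends $\delta'$ to $\frac{\delta'}{x\delta'+1}$. This sends $(2,\infty)$ onto $(\frac{2}{2x+1}, \frac{1}{x})$: the lower end $2$ goes to $\frac{2}{2x+1}$ and $\delta' \to \infty$ goes to $\frac{1}{x}$. Hence for a given $\delta$ in the target interval there is a unique rational $\delta' = \frac{\delta}{1-x\delta} > 2$, and it suffices to show \APX-hardness of \disp[\delta'] on a graph class where the optimum is $\Omega(|E(G)|)$, and then to translate.

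\textbf{Base hardness.} We start from {\sc Independent Set} on $3$-regular graphs, which is \APX-complete and whose optimum is at least $\frac{1}{6}|E(G)|$. For $\delta' \in (2,3]$ we use the S-reduction of \Cref{lemma:hardnessTwoToThree}, and for $\delta' \in (3,4]$ that of \Cref{lemma:hardnessThreeToFour}. For $\delta' \in (2k,3k]$ we compose with the Subdivision Reduction (\Cref{lemma:subdivisionReduction}, an L-reduction with $\alpha=\beta=1$). For each resulting instance $H$ we check that $\dispn[\delta'][H] = \Omega(|E(H)|)$. In $G \wreath P_2$ there are $|E(G)|\cdot 4 + |V(G)|$ edges, which is $O(|E(G)|)$ for cubic $G$. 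In the construction of \Cref{lemma:hardnessThreeToFour} the graph has $2|E(G)| + |E(G)|$ edges, which is fine. Subdivision multiplies the edge count by $k$, a constant. Since S-reductions preserve the optimum exactly, the optimum stays a constant fraction of the edge count.

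\textbf{Translation step.} Because $\delta' > 2$ but possibly $\delta' > 3$, we cannot simply identify auto-dispersion with dispersion. This is the main obstacle. For $\delta' \le 3$ the two notions coincide. For $\delta' > 3$ we need $\autodispn[\delta'][H] = \dispn[\delta'][H]$ on the constructed instances. We would handle this by subdividing enough: $H$ is then a subdivision of a base graph whose girth exceeds $\delta'$, so no short locally injective closed walks exist. Alternatively, we can verify directly that the S-reduction solutions are auto-dispersed, because points sit at distance $\ge \delta'/2$ from every short cycle. After the first translation the parameter is below $1$, and from then on dispersion and auto-dispersion agree. We then apply \Cref{theorem:translation} and \Cref{lemma:translationReduction} iteratively. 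After $j$ steps the optimum is $\OPT + j|E(H)|$, which is still $\Theta(|E(H)|)$, so the L-reduction hypothesis $\dispn = c|E|$ (with $c$ a constant lower bound) holds at each stage with a new constant. Here $\alpha = O(1)$ and $\beta = 1$, because $g^T$ preserves the absolute error.

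\textbf{Conclusion.} Composing a constant number ($x$) of L-reductions gives an L-reduction from cubic {\sc Independent Set} to \disp[\delta]. Therefore \disp[\delta] is \APX-hard for every $\delta \in (\frac{2}{2x+1},\frac{1}{x})$.
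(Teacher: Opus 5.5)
Your proposal is correct and follows the paper's route: {\sc Independent Set} on cubic graphs, the $\delta>2$ S-reductions (composed with the Subdivision Reduction) to get instances whose optimum is $\Omega(|E|)$, then $x$ applications of the Translation Reduction, which map $(2,\infty)$ onto $(\frac{2}{2x+1},\frac{1}{x})$. Your explicit handling of auto-dispersion for $\delta'>3$ is more careful than the paper, which glosses over it, but ``subdividing enough'' achieves nothing by itself, since subdivision scales $\delta'$ too; the reason dispersion and auto-dispersion coincide is that the constructed graphs already have girth at least $\delta'$, namely $3k$ for $(G \wreath P_2)_k$ with $\delta'\le 3k$, and $4$ for the bipartite $(3,4]$ construction with $\delta' \le 4$.
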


As \Cref{lemma:hardnessTwoToThree} uses the same reduction for the entire interval $(2, 3]$, there is a constant $c$ independent of $\delta$ such that there is no $(1+c)$-approximation for \disp{} for $\delta \in (2, 3]$.
Thus with the translation reduction we obtain a constant $c'$, again independent of $\delta$, such that there is no $(1+c')$-approximation  for \disp{} for $\delta \in (\frac{2}{3}, \frac{3}{4}]$.
This explains the lack of an approximation scheme with an approximation factor approaching $1$ as $\delta$ approaches $\frac{2}{3}$ from above, as mentioned in \Cref{sec:approximations}.
On the other hand, the Subdivision Reduction that is used repeatedly to achieve the hardness result for $\delta \rightarrow \infty$ pushes the factor closer to $1$ each time, which is then translated to the values of $\delta$ as $\delta$ approaches $1$ from below.
Additionally, the existence of our approximation scheme suggests that this is asymptotically best possible.

Next, we show \APX-hardness for the missing intervals, namely $(\frac{1}{x}, \frac{2}{2x-1})$ for $x \in \mathbb N^+$.
To do so, we construct a chain of reductions to show hardness for the interval $(1, 2)$, and then use the Translation Reduction to extend that hardness result to all the missing intervals.
Notably, the reduction chain to approach $\delta = 1$ from above is infinitely long, which corresponds to the existence of our approximation schemes from \Cref{sec:approximations}.

\begin{lemma}
    \label{lemma:apxHardnessNumeratorOne}
    For $\delta \in (\frac{1}{x}, \frac{2}{2x-1})$ and $x \in \mathbb N^+$, \disp{} is \APX-hard.
\end{lemma}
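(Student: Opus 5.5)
The plan is to first prove \APX-hardness on the whole interval $(1,2)$, which is the case $x = 1$. The intervals $(\frac{1}{x}, \frac{2}{2x-1})$ for $x \geq 2$ will then follow from repeated use of the Translation Reduction. Under the map $\delta \mapsto \frac{\delta}{\delta+1}$, the interval $(\frac{1}{x}, \frac{2}{2x-1})$ goes to $(\frac{1}{x+1}, \frac{2}{2x+1})$. By \Cref{lemma:translationReduction} this step is an L-reduction provided the hard instances satisfy $\dispn \geq c\cdot|E(G)|$. After each translation the optimum grows by $|E(G)|$, so this property persists with a new constant. It also persists when the hard instances for $\delta<2$ are bounded-degree graphs whose optimum is linear in $|V(G)|$. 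So the whole burden is to produce instances for $\delta \in (1,2)$ with $\dispn = \Theta(|E(G)|)$.

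For $(1,2)$ I would split the interval into the pieces $(\frac{2k+3}{2k+2}, \frac{2k+1}{2k}]$, which accumulate at $1$, together with $(\frac{3}{2}, 2)$. For $\delta \in (\frac{3}{2},2)$, I would start from {\sc Independent Set} on $3$-regular graphs, which is \APX-complete. The idea is to attach gadgets, for example the triangle gadgets from the tightness example of \Cref{lemma:approxLargerThreeHalves}, or pendant paths, so that each vertex of $G$ supports a fixed number of points in a $\delta$-dispersed set. An extra point should be possible exactly when the vertex is ``selected''. Two adjacent selected vertices should force two points within distance $\leq \frac{3}{2} < \delta$. Then $\OPT_{\disp} = a\cdot|V(G)| + \OPT_{IS}(G)$ for a constant $a$. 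Since the maximum degree is bounded, $\OPT_{IS} \geq |V(G)|/4$, so this additive shift yields an L-reduction with constant $\alpha$. The map $g$ would read off the selected vertices by rounding each point to its nearest gadget position, similar to the $S_V/S_E$ rounding in \Cref{lemma:approxLargerThreeHalves}. It would then discard conflicts, losing at most a constant per conflict; this gives $\beta$.

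For the pieces approaching $1$, I would build a chain of reductions. The natural attempt is to reduce from the previous piece: the instance for $(\frac{2k+3}{2k+2}, \frac{2k+1}{2k}]$ is obtained from one for the next larger piece by replacing each edge with a small gadget (a subdivided path plus a cycle/tree attachment). The intent is that the slack $\delta-1$ scales like the structure exploited in \Cref{lemma:approxLargerOneApproachingOne}, namely that points sit at offsets $\frac{2i-1}{4k}$ along tree paths. Equivalently, one can reduce directly from {\sc Independent Set} on cubic graphs. In that version each vertex gets a pendant structure of depth $k$ on which $|V|-1$-type tree packings are feasible. Adjacency in $G$ then creates a cycle which, by the cycle argument in \Cref{lemma:maximumSolutionSizeDeltaLargerOne}, forbids one point whenever the cycle length is below $\frac{1}{\delta-1}$, roughly $2k$. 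In both variants the optimum is $\Theta(|V|)=\Theta(|E|)$ for bounded degree, so the L-reduction constants exist for each fixed $k$ (they may degrade with $k$, which matches the approximation schemes).

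The main obstacle I expect is the gadget analysis for $\delta$ close to $1$. There I must show that an arbitrary $\delta$-dispersed set can be normalized, with only constant loss per conflict, into the canonical form that encodes an independent set. The difficulty is that points can slide continuously along edges. My first attempt would be a potential or shifting argument: along each cycle created by an edge of $G$, the offsets must increase by at least $\delta-1$ per step, as in \Cref{lemma:maximumSolutionSizeDeltaLargerOne}. Cycles that are short relative to $\frac{1}{\delta-1}$ therefore lose a point, and charging these losses to edges of $G$ with both endpoints selected gives $\beta$.
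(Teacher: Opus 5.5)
\paragraph{Verdict: the proposal has a genuine gap.} Your outer step matches the paper. You prove the case $x=1$, i.e.\ $\delta\in(1,2)$, and then apply the Translation Reduction $x-1$ times, which stays an L-reduction because the optimum remains $\Omega(|E(G)|)$.

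The gap is that the case $(1,2)$ itself is never proved. For $(\frac32,2)$ you only name candidate gadgets (triangles, pendant paths) and never fix a construction. So the identity $\OPT = a\cdot|V(G)|+\OPT_{IS}(G)$ is only a design goal. The rounding and normalization of an arbitrary $\delta$-dispersed set, which you yourself call the main obstacle, is exactly the part that would carry the proof.

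For the pieces $(\frac{2k+3}{2k+2},\frac{2k+1}{2k}]$ the sketch is vaguer still, and its mechanism rests on a misreading of \Cref{lemma:maximumSolutionSizeDeltaLargerOne}. That cycle argument shows that for \emph{every} $\delta>1$, no cycle of \emph{any} length can have a point in the interior of each of its edges. There is no threshold $\frac{1}{\delta-1}$ on cycle length separating cycles that lose a point from cycles that do not. The quantity $\frac{1}{\delta-1}$ instead bounds how long a path can be if all of its edges are occupied. As written, you have no reduction for any $\delta\in(1,2)$, and you would need a separate verified gadget for each of infinitely many $k$.

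\paragraph{The missing idea.} No new gadget is needed: $(1,2)$ is reached by composing reductions the paper already has.
\begin{enumerate}
\item Start from the \APX-hardness for all $\delta'>2$. It comes from {\sc Independent Set} on $k$-regular graphs via \Cref{lemma:hardnessTwoToThree,lemma:hardnessThreeToFour} and the Subdivision Reduction, and its instances have optimum $\Omega(|E|)$.
\item Apply the Translation Reduction $\ell$ times. This maps $(2,\infty)$ onto $(\frac{2}{2\ell+1},\frac{1}{\ell})$.
\item Apply the Subdivision Reduction with parameter $\ell+1$. This maps the interval onto $I_\ell=(\frac{2(\ell+1)}{2\ell+1},\frac{\ell+1}{\ell})$.
\item Since $\frac{2(\ell+1)}{2\ell+1}<\frac{\ell+2}{\ell+1}$ (equivalent to $\ell>0$), consecutive intervals $I_\ell$ and $I_{\ell+1}$ overlap. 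Also $I_1=(\frac43,2)$ and the left endpoints tend to $1$, so $\bigcup_{\ell}I_\ell=(1,2)$.
\item Subdividing multiplies $|E|$ by $\ell+1$ but leaves the optimum unchanged, so these instances still have optimum $\Omega(|E|)$. Your translation step then finishes the proof.
\end{enumerate}
This handles all infinitely many pieces near $1$ at once.
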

\begin{proof}
    We first show hardness for every $\delta \in (1,2)$.
    Let $I_\ell \coloneqq (\frac{2(\ell+1)}{2\ell+1}, \frac{\ell+1}{\ell})$ for $\ell \in \mathbb N^+$.
    Then $I_\ell$ is the interval we obtain by applying the Translation Reduction $\ell$ times to the interval $(2, \infty)$ and then applying the Subdivision Reduction $\ell+1$ times.
    First, we observe the following:
    $$\frac{2((\ell+1)+1)}{2(\ell+1)+1} < \frac{2(\ell+1)}{2\ell+1} < \frac{(\ell+1)+1}{(\ell+1)} < \frac{\ell+1}{\ell}$$
    The first and the last inequality are obvious.
    For the second inequality, $\frac{2(\ell+1)}{2\ell+1} < \frac{(\ell+1)+1}{(\ell+1)} \Leftrightarrow 2(\ell+1)^2 < (2\ell+1)(\ell+1)+2\ell+1 \Leftrightarrow 0 < \ell$, which is true for all $\ell \in \mathbb N^+$.
    It thus follows that the intersection of $I_\ell$ and $I_{\ell+1}$ is non-empty.
    Further, for $\ell = 1$ we have $\frac{\ell+1}{\ell} = 2$ and additionally $\lim_{\ell \rightarrow \infty}\frac{2(\ell+1)}{2\ell+1} = 1$.
    Thus $\bigcup_{\ell \in \mathbb N^+} I_\ell = (1, 2)$, and it follows that \disp{} is \APX-hard for every $\delta \in (1,2)$.
    By applying the Translation Reduction repeatedly to the interval $(1,2)$, we obtain the claimed result.
\end{proof}

From \Cref{lemma:apxHardnessNumeratorTwo,lemma:apxHardnessNumeratorOne} as well as \Cref{sec:approximations} we can conclude the following theorem:

\begin{theorem}
    \label{thm:apxHardnessSmallerTwo}
    For every $\delta < 2$ that cannot be written as $\frac{1}{x}$ or $\frac{2}{x}$ for an $x \in \mathbb N^+$, \disp{} is \APX-complete.
\end{theorem}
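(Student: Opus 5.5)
The theorem has two halves: membership in \APX{} and \APX-hardness. Hardness is already established. Every $\delta<2$ that is not of the form $\frac1x$ or $\frac2x$ lies in one of the open intervals $(\frac{2}{2x+1},\frac1x)$ or $(\frac1x,\frac{2}{2x-1})$ for some $x\in\mathbb N^+$. These intervals are exactly the gaps between consecutive polynomial-time values $\dots<\frac{2}{2x+1}<\frac1x<\frac{2}{2x-1}<\dots$, and the case $x=1$ covers $(\frac23,1)$ and $(1,2)$. So hardness follows directly from \Cref{lemma:apxHardnessNumeratorTwo} and \Cref{lemma:apxHardnessNumeratorOne}. The remaining work is to give a constant-factor approximation for every such $\delta$.

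For $\delta\in(1,2)$ I would combine the results of \Cref{sec:approximations}. On $(1,\frac32]$, \Cref{lemma:approxLargerOneApproachingOne} with $k=1$ gives a factor-$2$ algorithm. On $(\frac32,2)$, \Cref{lemma:approxLargerThreeHalves} shows that an optimal $2$-dispersed set, computable by \Cref{lemma:polyTimeTwiceUnitFraction}, is within a factor $2$. That set is also $\delta$-dispersed since $\delta<2$. For $\delta\in(\frac23,1)$, \Cref{lemma:approxSmallerOne} with $k=1$ shows that an optimal $1$-dispersed set, computable by \Cref{lemma:polyTimeUnitFraction}, is a $\frac32$-approximation. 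It is $\delta$-dispersed because $\delta<1$.

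For smaller $\delta$, I would reduce to these two base intervals using \Cref{theorem:translationOriginal}. Write $\delta=\frac{\delta_0}{1+m\delta_0}$, where $\delta_0\in(\frac23,1)\cup(1,2)$ is obtained by applying $\delta\mapsto\frac{\delta}{1-\delta}$ exactly $m$ times. Since $\delta\le\frac23\le 3$ and $\delta_0<3$, auto-dispersion and dispersion coincide, so the theorem gives $\dispn=\dispn[\delta_0]+m|E(G)|$. Given a $c$-approximate $\delta_0$-dispersed set $S_0$, I would apply the map $g^T$ of \Cref{def:translationReduction} $m$ times. Each application maps a $\frac{\delta'}{\delta'+1}$-dispersed set to a $\delta'$-dispersed one, so here I use the forward direction, which adds one point per edge. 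More simply, I can place $m$ extra points per edge explicitly, as in the constructive proof of Hartmann et al. This yields a $\delta$-dispersed set of size $|S_0|+m|E(G)|$. Its ratio to the optimum is at most $\frac{c\,|S_0|+m|E(G)|}{|S_0|+m|E(G)|}\le c$, so the factor is preserved.

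I expect the main obstacle to be the constructive direction of the translation. \Cref{theorem:translationOriginal} is stated only as an equality of optima, and $g^T$ goes the other way. To handle this, I would either invoke the explicit constructive map from Hartmann et al.\ (inverse to $g^T$), or avoid translation entirely. The second route is direct: a $\frac1x$-dispersed set with $x|E(G)|$ points is computable, and the same arguments as in \Cref{lemma:approxSmallerOne} bound the ratio via $\dispn\le$ const$\cdot|E(G)|$. Since $\delta>\frac{2}{2x+1}$ gives $\dispn\le (x+1)|E(G)|$ or similar, this yields a constant factor. Finally, \APX{} membership also needs \NP-membership of the optimization problem, which follows from Grigoriev et al.
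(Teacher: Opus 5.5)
Your proposal is correct and follows the paper's own argument. Hardness comes from \Cref{lemma:apxHardnessNumeratorTwo,lemma:apxHardnessNumeratorOne}, and membership from the approximations of \Cref{sec:approximations} on $(\frac23,1)$ and $(1,2)$, carried to smaller $\delta$ via the translation theorem; you are right that this transfer needs the point-adding direction of Hartmann et al.\ rather than $g^T$, which removes a point per edge, and the paper leaves this step implicit. If you use your fallback route instead, note that for $\delta\in(\frac1x,\frac{2}{2x-1})$ a $\frac1x$-dispersed set is not $\delta$-dispersed, so there (for $x\ge 2$) take an optimal $\frac{1}{x-1}$-dispersed set together with the per-edge bound $\dispn\le x\,|E(G)|$, which gives ratio at most $\frac{x}{x-1}$.
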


\newpage
\bibliography{bibliography}

\appendix

\end{document}